\providecommand{\algorithmname}{Algorithm}
\theoremstyle{remark}
\newtheorem{theorem}{Theorem}
\newtheorem{step}{Step}
\newtheorem{note}{Note}
\newtheorem{lemma}{Lemma}
\newtheorem{definition}{Definition}
\newtheorem{observation}{Observation}
\newtheorem{remark}{Remark}
\theoremstyle{remark}
\newtheorem{example}{Example}
\title{Reduced Dimensional Optimal Vector Linear Index Codes for Index Coding Problems with Symmetric Neighboring and Consecutive Side-information}
\begin{document}
%\onecolumn
\author{Mahesh~Babu~Vaddi~and~B.~Sundar~Rajan\\ 
 Department of Electrical Communication Engineering, Indian Institute of Science, Bengaluru 560012, KA, India \\ E-mail:~\{vaddi,~bsrajan\}@iisc.ac.in }
\maketitle
\begin{abstract}
A single unicast index coding problem (SUICP) with symmetric neighboring and consecutive side-information (SNCS) has $K$ messages and $K$ receivers, the $k$th receiver $R_k$ wanting the $k$th message $x_k$ and having the side-information $\mathcal{K}_k=\{x_{k-U},\dots,x_{k-2},x_{k-1}\}\cup\{x_{k+1}, x_{k+2},\dots,x_{k+D}\}$. The single unicast index coding problem with symmetric neighboring and consecutive side-information, SUICP(SNCS), is motivated by topological interference management problems in wireless communication networks. 
Maleki, Cadambe and Jafar obtained the symmetric capacity of this SUICP(SNCS) and proposed optimal length codes by using Vandermonde matrices. In our earlier work, we gave optimal length $(U+1)$-dimensional vector linear index codes for SUICP(SNCS) satisfying some conditions on $K,D$ and $U$ \cite{VaR1}. In this paper, for SUICP(SNCS) with arbitrary $K,D$ and $U$, we construct optimal length $\frac{U+1}{\text{gcd}(K,D-U,U+1)}$-dimensional vector linear index codes. We prove that the constructed vector linear index code is of minimal dimension if $\text{gcd}(K-D+U,U+1)$ is equal to $\text{gcd}(K,D-U,U+1)$. The proposed construction gives optimal length scalar linear index codes for the SUICP(SNCS) if $(U+1)$ divides both $K$ and $D-U$. The proposed construction is independent of field size and works over every field. We give a low-complexity decoding for the SUICP(SNCS). By using the proposed decoding method, every receiver is able to decode its wanted message symbol by simply adding some index code symbols (broadcast symbols).		
\end{abstract}
%%%%%%%%%%%%%%%%%
\section{Introduction and Background}
\label{sec1}
\IEEEPARstart {A}{n} index coding problem, consists of a transmitter and a set of $m$ receivers, $R=\{R_0,R_1,\ldots,R_{m-1}\}$. The transmitter has a set of $n$ independent messages, $X=\{x_0,x_1,\ldots,x_{n-1}\}$. Each receiver, $R_k=(\mathcal{K}_k,\mathcal{W}_k)$, knows a subset of messages, $\mathcal{K}_k \subset X$, called its \textit{side-information}, and demands to know another subset of messages, $\mathcal{W}_k \subseteq \mathcal{K}_k^\mathsf{c}$, called its \textit{Want-set} or \textit{Demand-set}. The transmitter can take cognizance of the side-information of the receivers and broadcast coded messages, called the index code, over a noiseless channel. The problem of index coding with side-information was introduced by Birk and Kol \cite{BiK}. An index coding problem is called single unicast \cite{OnH} if the demand sets of the receivers are disjoint and the size of each demand set is one.

%The objective is to minimize the number of coded transmissions, called the length of the index code, such that each receiver can decode its demanded message using its side-information and the coded messages.

Index coding with side-information is motivated by wireless broadcasting applications. In applications like video-on-demand, during the initial transmission by the transmitter, each receiver may miss a part of data. A na\"{i}ve technique is to rebroadcast the entire data again to the receivers. This is an inefficient approach. The index coding problem, therefore aims at reducing the number of transmissions by the transmitter by intelligently using the data already available at the receivers.  

%The problem of index coding with side-information was introduced by Birk and Kol \cite{BiK}. 
%Bar-Yossef \textit{et al.} \cite{YBJK} studied the class of index coding problems in which each receiver demands only one message and the number of receivers equals the number of messages. 
%Ong and Ho \cite{OnH} classify the binary index coding problem depending on the demands and the side-information possessed by the receivers. An index coding problem is called unicast if the demand sets of the receivers are disjoint. An index coding problem is called single unicast if the demand sets of the receivers are disjoint and the size of each demand set is one. Any unicast index problem can be equivalently reduced to an single unicast problem \cite{YBJK}. For the unicast index coding problem, in \cite{YBJK}, it was shown that the length of the optimal linear index code is equal to the minrank of the side-information graph of the index coding problem but finding the minrank is NP-hard.

%%A scalar linear index code of length $N$ $(<n)$ is represented by a matrix $\mathbf{L}$ $(\in \mathbb{F}_q^{n \times N})$, where the $j$th column contains the coefficients of the $j$th coded transmission and the $k$th row $L_k$ $(\in \mathbb{F}_q^{1\times N})$ contains the coefficients used for mixing message $x_k$ in the $N$ transmissions. The broadcast vector is 
%%\begin{align*}
%% c=xL=\sum_{k=0}^{n-1}x_kL_k.
%%\end{align*}

In a vector linear index code (VLIC) $x_k \in \mathbb{F}_q^{p_k},~ x_k=(x_{k,1},x_{k,2},\ldots,x_{k, p_k }),~x_{k,j} \in \mathbb{F}_q$ for 
$k \in [0:n-1]$ and $j \in [1:p_k]$ where $\mathbb{F}_q$ is a finite field with $q$ elements. In vector linear index coding setting, we refer $x_k \in \mathbb{F}_q^{p_k}$ as a message vector or a message and $x_{k,1},x_{k,2},\ldots,x_{k,p_k} \in \mathbb{F}_q$ as the message symbols. An index coding is a mapping defined as
\begin{align*}
\mathfrak{E}: \mathbb{F}^{p_0+p_1+\ldots+p_{n-1}}_q \rightarrow \mathbb{F}^N_q,
\end{align*}
where $N$ is the length of index code. The index code $\mathfrak{C}=\{(c_0,c_1,\ldots,c_{N-1}) \}$ is the collection of all images of the mapping $\mathfrak{E}$. We call the symbols $c_0$,$c_1,\ldots$,$c_{N-1}$ the code (broadcast) symbols, which are the symbols broadcasted by the transmitter. If $p_0=p_1=\cdots=p_{n-1}$, then the index code is called symmetric rate vector index code. If $p_0=p_1=\cdots=p_{n-1}=1$, then the index code is called scalar index code. The index coding problem is to design an index code such that the number of transmissions $N$ broadcasted by the transmitter is minimized and all the receivers get their wanted messages by using the index code symbols (broadcast symbols) broadcasted and their known side-information. 

Maleki, Cadambe and Jafar \cite{MCJ} found the capacity of SUICP(SNCS) with $K$ messages and $K$ receivers, each receiver has a total of $U+D$ side-information, corresponding to the $U$ messages before and $D$ messages after its desired message. In this setting, the $k$th receiver $R_k$ demands the message $x_k$ having the side-information
\begin{equation}
\label{antidote}
{\cal K}_k= \{x_{k-U},\dots,x_{k-2},x_{k-1}\}\cup\{x_{k+1}, x_{k+2},\dots,x_{k+D}\}.
\end{equation}

The symmetric capacity of this index coding problem setting is:
\begin{equation}
\label{capacity}
C=\left\{
                \begin{array}{ll}
                  {1 ~~~~~~~~~~~~ \mbox{if} ~~ U+D=K-1}\\
                  {\frac{U+1}{K-D+U}} ~~~ \mbox{if} ~~U+D\leq K-2. 
                  \end{array}
              \right.
\end{equation}
where $U,D \in$ $\mathbb{Z},$ $0 \leq U \leq D$.

In the one-sided side-information case, i.e., the cases where $U$ is zero, the $k$th receiver $R_k$ demands the message $x_k$ having the side-information,
\begin{equation}
\label{antidote1}
{\cal K}_k =\{x_{k+1}, x_{k+2},\dots,x_{k+D}\}, 
\end{equation}
\noindent
for which \eqref{capacity} reduces to
%%%%%%%%
\begin{equation}
\label{capacity1}
C=\left\{
                \begin{array}{ll}
                  {1 ~~~~~~~~~~~~ \mbox{if} ~~ D=K-1}\\
                  {\frac{1}{K-D}} ~~~~~~~ \mbox{if} ~~D\leq K-2 
                  \end{array}
              \right.
\end{equation}
%%%%%%%
symbols per message.
 
Jafar \cite{TIM} established the relation between index coding problem and topological interference management problem. The capacity and optimal coding results in index coding can be used in corresponding topological interference management problems. The symmetric and neighboring side-information problems are motivated by topological interference management problems.
%%%%%%%%%%%%%%%%%%%%%%%%%%%%%%%%%%%%%%%%%%

\subsection{Review of Adjacent Independent Row (AIR) matrices}
In \cite{VaR2}, we constructed binary matrices of size $m \times n (m\geq n)$ such that any $n$ adjacent rows of the matrix are linearly independent over every field. We refer these matrices as Adjacent Independent Row (AIR) matrices.

The matrix obtained by Algorithm \ref{algo2} is called the $(m,n)$ AIR matrix and it is denoted by $\mathbf{L}_{m\times n}.$ The general form of the $(m,n)$ AIR matrix is shown in Fig. \ref{fig1}. It consists of several submatrices (rectangular boxes) of different sizes as shown in Fig.\ref{fig1}. %The location and sizes of these submatrices and other quantities marked in the figure are used subsequently in Section \ref{sec3} to describe the low complexity decoding for $(K,D,U)$ SUICP-SNI by using AIR matrices.
%%%%%%%%%%%%%
\begin{figure*}
\centering
\includegraphics[scale=0.38]{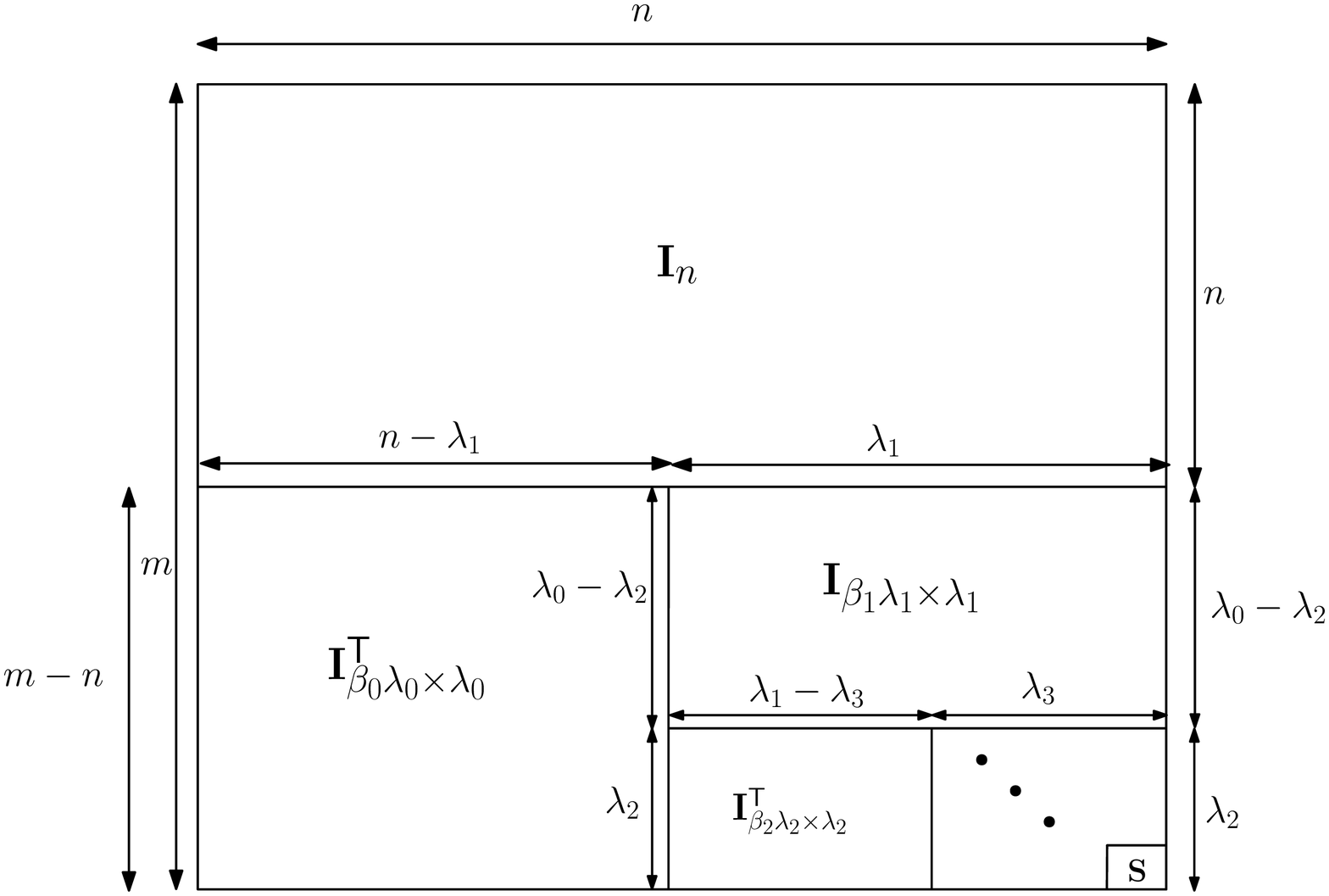}\\
~ $\mathbf{S}=\mathbf{I}_{\lambda_l \times \beta_l \lambda_l}$ if $l$ is even and ~$\mathbf{S}=\mathbf{I}_{\beta_l\lambda_l \times \lambda_l}$ otherwise.
\caption{AIR matrix of size $m \times n$.}
\label{fig1}
~ \\
\hrule
\end{figure*}
%%%%%%%%%%%%%
The description of the submatrices are as follows: Let $c$ and $d$ be two positive integers and $d$ divides $c$. The following matrix denoted by $\mathbf{I}_{c \times d}$ is a rectangular matrix.
\begin{align}
\label{rcmatrix}
\mathbf{I}_{c \times d}=\left.\left[\begin{array}{*{20}c}
   \mathbf{I}_d  \\
   \mathbf{I}_d  \\
   \vdots  \\
   %\mathbf{I}_{n^\prime}  \\
   \mathbf{I}_d 
   \end{array}\right]\right\rbrace \frac{c}{d}~\text{number~of}~ \mathbf{I}_d~\text{matrices}
\end{align}
and $\mathbf{I}_{d \times c}$ is the transpose of $\mathbf{I}_{c \times d}.$

Towards explaining the other quantities shown in the AIR matrix shown in Fig. \ref{fig1}, for a given $m$ and $n,$ let $\lambda_{-1}=n,\lambda_0=m-n$ and\begin{align}
\nonumber
n&=\beta_0 \lambda_0+\lambda_1, \nonumber \\
\lambda_0&=\beta_1\lambda_1+\lambda_2, \nonumber \\
\lambda_1&=\beta_2\lambda_2+\lambda_3, \nonumber \\
\lambda_2&=\beta_3\lambda_3+\lambda_4, \nonumber \\
&~~~~~~\vdots \nonumber \\
\lambda_i&=\beta_{i+1}\lambda_{i+1}+\lambda_{i+2}, \nonumber \\ 
&~~~~~~\vdots \nonumber \\ 
\lambda_{l-1}&=\beta_l\lambda_l.
\label{chain}
\end{align}
where $\lambda_{l+1}=0$ for some integer $l,$ $\lambda_i,\beta_i$ are positive integers and $\lambda_i < \lambda_{i-1}$ for $i=1,2,\ldots,l$. The number of submatrices in the AIR matrix is $l+2$ and the size of each submatrix is shown using $\lambda_i,\beta_i,$  $i \in [0:l].$
%%%%%%%%%%%%%%%

		\begin{algorithm}
		\caption{Algorithm to construct the AIR matrix $\mathbf{L}$ of size $m \times n$}
			\begin{algorithmic}[2]
				% \item $m$ and $n$ are integers such that $mK+n(D+1)=\text{gcd}(K,D+1)$.
				 \item Let $\mathbf{L}=m \times n$ blank unfilled matrix.
				\item [Step 1]~~~
				\begin{itemize}
				\item[\footnotesize{1.1:}] Let $m=qn+r$ for $r < n$.
				\item[\footnotesize{1.2:}] Use $\mathbf{I}_{qn \times n}$ to fill the first $qn$ rows of the unfilled part of $\mathbf{L}$.
				\item[\footnotesize{1.3:}] If $r=0$, Go to Step 3.
				\end{itemize}

				\item [Step 2]~~~
				\begin{itemize}
				\item[\footnotesize{2.1:}] Let $n=q^{\prime}r+r^{\prime}$ for $r^{\prime} < r$.
				\item[\footnotesize{2.2:}] Use $\mathbf{I}_{q^{\prime}r \times r}^{\mathsf{T}}$ to fill the first $q^{\prime}r$ columns of the unfilled part of $\mathbf{L}$.
			    \item[\footnotesize{2.3:}] If $r^{\prime}=0$, go to Step 3.	
				\item[\footnotesize{2.4:}] $m\leftarrow r$ and $n\leftarrow r^{\prime}$.
				\item[\footnotesize{2.5:}] Go to Step 1.
				\end{itemize}
				\item [Step 3] Exit.
		
			\end{algorithmic}
			\label{algo2}
		\end{algorithm}

In \cite{VaR2}, we gave an optimal length scalar linear index code for one-sided SUICP(SNCS) using AIR encoding matrices. In \cite{VaR1}, we constructed optimal length $(U+1)$ dimensional VLICs for two-sided SUICP(SNCS) satisfying some conditions on $K,D$ and $U$. The VLIC construction in \cite{VaR1} does not use AIR matrices. In \cite{VaR3}, we gave a low-complexity decoding for one-sided SUICP(SNCS) with AIR matrix as encoding matrix. The low complexity decoding method helps to identify a reduced set of side-information for each user with which the decoding can be carried out. By this method every receiver is able to decode its wanted message symbol by simply adding some broadcast symbols.	

\subsection{Contributions}

In a $b$-dimension vector index coding, the transmitter has to wait for $b$-realizations of a given message symbol to perform the index coding. In scalar index codes, the transmitter encodes each realization of the message symbols separately. In a delay critical environment like real time video streaming, it may not be desirable to wait for $b$-realizations of a given message symbol. Hence, in this paper, we focus on reducing the dimension of VLICs without compromising on optimal length of the code.

\begin{itemize}
\item For the two-sided SUICP(SNCS) with arbitrary $K,D$ and $U$, we construct optimal length $\frac{U+1}{\text{gcd}(K,D-U,U+1)}$ dimensional VLICs. The proposed construction is independent of field size and works over every field. 
\item We prove that the constructed VLICs are of minimal dimension if $\text{gcd}(K-D+U,U+1)$ is equal to $\text{gcd}(K,D-U,U+1)$. In other words, optimal VLICs of lesser dimension do not exist.
\item We give a low-complexity decoding for the two-sided SUICP(SNCS). By this method every receiver is able to decode its wanted message symbol by simply adding some broadcast symbols.	
\item We give generator matrices for the proposed VLICs by using AIR matrices.
\end{itemize}

The SUICP(SNCS) considered in this paper was referred as symmetric neighboring antidote multiple unicast index coding problem by Maleki, Cadambe and Jafar in \cite{MCJ}.

All the subscripts in this paper are to be considered $~\text{\textit{modulo}}~ K$. 

The paper is organized as follows. In Section \ref{sec2}, we give a construction of reduced dimension optimal VLICs for two-sided SUICP(SNCS). In Section \ref{sec3}, we give some special cases of the proposed construction.

\section{Construction of Reduced Dimension Vector Linear Index Codes}
\label{sec2}
For the $b$ dimensional vector linear index coding, we refer $x_k \in \mathbb{F}_q^b$ as a message vector and $x_{k,1},x_{k,2},\ldots,x_{k,b} \in \mathbb{F}_q$ as message symbols.
%\onecolumn
%%%%%%%%%%%%%%%%%%%%%%%%%%%
%\begin{figure*}[t]
%\centering
%\includegraphics[scale=0.5]{vaddi4}
%\caption{Alignment of the symbols $y_s$ for $s \in [0:K_a-1]$.}
%\label{vcfig1}
%\hrule
%\end{figure*}
%%%%%%%%%%%%%%%%%%%%%%%%%%%
%\twocolumn

\begin{theorem}
\label{vcthm1}
Given arbitrary positive integers $K,D$ and $U$, consider the two-sided SUICP(SNCS) with $K$ messages $\{x_0,x_1,\cdots,x_{K-1}\}$ and the receivers being $K,$ and the receiver $R_k$ for $k \in [0:K-1]$ wanting the message $x_k$ and having the side-information given by 
\begin{align}
\label{thm1eq}
{\cal K}_k= \{x_{k-U},\dots,x_{k-2},x_{k-1}\}~\cup \{x_{k+1},x_{k+2},\dots,x_{k+D}\}.
\end{align}

%Let 
%\begin{align}
%\label{thm1eq1}
%a=\text{gcd}(K,D-U,U+1).
%\end{align}

Let 
\begin{align}
\label{thm1eq2}
%\nonumber
u_a=\frac{U+1}{a},~~~~&\Delta_a=\frac{D-U}{a},~~~~ K_a=\frac{K}{a},
\end{align}
where
\begin{align}
\label{thm1eq1}
a=\text{gcd}(K,D-U,U+1).
\end{align}

Let $x_k=(x_{k,1},x_{k,2},\cdots,x_{k,u_a})$ be the message vector wanted by the $k$th receiver, where $x_k \in \mathbb{F}_q^{u_a}, x_{k,i} \in \mathbb{F}_q$ for every $k \in [0:K-1]$ and $i \in [1:u_a]$. 

Let
\begin{align}
\label{thm1eq6}
%\nonumber
y_s=\sum_{i=1}^{u_a} \sum_{j=0}^{a-1}& x_{a(s+1-i)+j,i} ~~~~~~~~~s \in [0:K_a-1].
\end{align}

Let $\mathbf{L}$ be the AIR matrix of size $K_a \times (K_a-\Delta_a)$ and $L_s$ be the $s$th row of $\mathbf{L}$ for every $s \in [0:K_a-1]$. The optimal length $u_a$ dimensional VLIC for the two-sided SUICP(SNCS) with $K$ messages and $D$ side-information after and $U$ side-information before is given by 
\begin{align}
\label{thm1eq5}
[c_0~c_1~\ldots~c_{K_a-\Delta_a-1}]=\sum_{s=0}^{K_a-1}y_sL_s.
\end{align} 
\end{theorem}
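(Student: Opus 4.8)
The plan is to verify the two defining properties of a valid index code: (i) \emph{decodability} — every receiver $R_k$ can recover all $u_a$ symbols of its wanted message vector $x_k$ from the $K_a-\Delta_a$ broadcast symbols and its side-information; and (ii) \emph{optimality of length} — the length $K_a-\Delta_a$ equals $u_a/C = \frac{U+1}{a}\cdot\frac{K-D+U}{U+1} = \frac{K-D+U}{a}$, which is indeed $K_a-\Delta_a$ since $K-D+U = K - (D-U)$. The length check is immediate from \eqref{thm1eq2} and the capacity formula \eqref{capacity}; the content is entirely in decodability.

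First I would unpack the structure of \eqref{thm1eq5}. The broadcast vector is $\sum_{s=0}^{K_a-1} y_s L_s$, i.e.\ the code symbols are linear combinations of the $K_a$ ``super-symbols'' $y_s$ with coefficient matrix $\mathbf{L}^{\mathsf T}$ (columns indexed by $s$). Since $\mathbf{L}$ is the $(K_a)\times(K_a-\Delta_a)$ AIR matrix, \emph{any} $K_a-\Delta_a$ consecutive rows $L_s, L_{s+1},\dots$ are linearly independent over every field. Hence, from the broadcast symbols, a receiver who knows all but a window of $K_a-\Delta_a$ consecutive $y_s$'s can solve for those remaining ones — equivalently, a receiver can recover any single target $y_{s^\ast}$ provided it can cancel out the contributions of all $y_s$ with $s$ outside a chosen length-$(K_a-\Delta_a)$ window containing $s^\ast$. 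This is exactly the mechanism used for the one-sided problem in \cite{VaR2,VaR3}, now lifted to super-symbols.

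Next I would show that, fixing a receiver $R_k$ and a target coordinate $i\in[1:u_a]$, the side-information $\mathcal{K}_k$ of \eqref{thm1eq} lets $R_k$ compute a set of $\Delta_a$ ``known'' super-symbols $y_s$ whose indices are $\Delta_a$ consecutive values modulo $K_a$, and that $x_{k,i}$ appears in exactly one $y_{s^\ast}$ lying just outside that block. The key bookkeeping is in \eqref{thm1eq6}: $y_s$ is a sum over $i$ and over the residue $j\in[0:a-1]$ of message symbols $x_{a(s+1-i)+j,\,i}$, so as $s$ ranges over an interval and $j$ over a residue class, the message indices $a(s+1-i)+j$ sweep out consecutive integers. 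Using that $U+1 = a u_a$ and $D-U = a\Delta_a$, the block $\{x_{k+1},\dots,x_{k+D}\}\cup\{x_{k-U},\dots,x_{k-1}\}$ of side-information translates (after grouping into super-symbols) into knowledge of exactly $\Delta_a$ consecutive $y_s$'s, plus partial knowledge that pins down $x_{k,i}$ once $y_{s^\ast}$ is decoded. I would make this precise by computing, for each $i$, the exact index $s^\ast = s^\ast(k,i)$ with $a(s^\ast+1-i)\le k < a(s^\ast+2-i)$ and checking that every \emph{other} message symbol occurring in $y_{s^\ast}$ is in $\mathcal{K}_k$, so that $y_{s^\ast}$ determines $x_{k,i}$; and that the $\Delta_a$ super-symbols $R_k$ can form from the remaining side-information are precisely the ones needed to reduce \eqref{thm1eq5} to a solvable system by the AIR adjacency property.

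The main obstacle will be the index arithmetic modulo $K$ in the wrap-around case: one must confirm that the $\Delta_a$ ``cancellable'' super-symbols really are $\Delta_a$ \emph{cyclically consecutive} indices in $[0:K_a-1]$ (so that the complementary window of length $K_a-\Delta_a$ on which the AIR matrix is invertible actually contains $s^\ast$), and that no message symbol outside $\mathcal{K}_k\cup\{x_k\}$ sneaks into the relevant $y_s$'s. I expect to handle this by a careful case split on the position of $k$ relative to the residue classes mod $a$ and mod $K_a$, mirroring the decoding analysis of \cite{VaR3} but tracking the extra coordinate index $i$; once the combinatorial alignment is established, invertibility of the square $(K_a-\Delta_a)$ submatrix of consecutive AIR rows, guaranteed by \cite{VaR2}, closes the argument. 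Finally, since the AIR construction and its adjacency property hold over every field, the resulting code is field-independent, as claimed.
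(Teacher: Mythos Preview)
Your strategy is correct and actually cleaner than the paper's for bare decodability: with $s=\lfloor k/a\rfloor$, the $\Delta_a$ super-symbols $y_{s+u_a},\dots,y_{s+u_a+\Delta_a-1}$ are indeed fully computable from $\mathcal{K}_k$, and AIR invertibility on the complementary cyclic window of length $K_a-\Delta_a$ then yields all of $y_s,\dots,y_{s+u_a-1}$, each of which determines one $x_{k,i}$ since its remaining summands lie in $\mathcal K_k$. One caution: this known block does \emph{not} depend on $i$, and $y_{s^\ast}=y_{s+i-1}$ is adjacent to it only when $i=u_a$; for smaller $i$ the intervening $y_{s+i},\dots,y_{s+u_a-1}$ each contain a coordinate of $x_k$ and are not side-information, so your ``just outside that block'' phrasing would fail if taken literally as ``the $\Delta_a$ super-symbols immediately following $y_{s^\ast}$ are known.'' The paper takes a different route precisely because of this: it pulls back the one-sided decoding of \cite{VaR2,VaR3} to obtain combinations $S_t=y_t+(\text{terms from }y_{t+1},\dots,y_{t+\Delta_a})$ and then decodes $x_{k,u_a},x_{k,u_a-1},\dots,x_{k,1}$ \emph{sequentially} by successive interference cancellation, the interference at step $l$ coming only from already-decoded coordinates of $x_k$. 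Your route is shorter as a proof of the theorem, but the paper's sequential scheme is what delivers its advertised low-complexity decoding (each wanted symbol recovered by adding a few broadcast symbols rather than via a full $(K_a-\Delta_a)\times(K_a-\Delta_a)$ inversion), so the difference in method is not incidental to the paper's aims.
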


%%%%%%%%%
\begin{proof}
We prove that every receiver $R_k$ for $k \in [0:K-1]$ decodes its wanted message vector $(x_{k,1},x_{k,2},\ldots,x_{k,u_a})$. 

Let $s=\left \lfloor \frac{k}{a} \right  \rfloor$. We have 
\begin{align}
\label{thm1rem}
k=as+ r~~\text{for~some}~ r \in [0:a-1]. 
\end{align}

Let
\begin{align}
\label{thm1eq3}
%\nonumber
&z_{s,i}=\sum_{j=0}^{a-1} x_{as+j,i}, ~~~~s \in [0:K_a-1]~\text{and}~i \in [1:u_a].
\end{align}

We refer $z_{s,i}$ as extended message symbol and $z_s=(z_{s,1}~z_{s,2}~\ldots,z_{s,u_a})$ as extended message vector. The extended message symbol $z_{s,i}$ comprises of $'a'$ message symbols $x_{as,i},x_{as+1,i},\ldots,x_{as+a-1,i}$. Each of the $Ku_a$ message symbols $x_{k,i}$ for $k \in [0:K-1]$ and $i \in [1:u_a]$ appear exactly once in $K_au_a$ extended message symbols $z_{s,i}$ for $s \in [0:K_a-1]$ and $i \in [1:u_a]$.

From \eqref{thm1eq6} and \eqref{thm1eq3}, we have
\begin{align}
\label{thm1eq4}
y_s=\sum_{i=1}^{u_a} z_{s+1-i,i}~~\text{for}~~s \in [0:K_a-1].
\end{align}

The symbol $y_s$ comprises of $u_a$ extended message symbols $z_{s,1},z_{s-1,2},$ $\ldots,z_{s-u_a-1,u_a}$. The $K_au_a$ symbols $z_{s,i}$ for $s \in [0:K_a-1]$ and $i \in [1:u_a]$ appear exactly once in $K_a$ symbols $y_s$ for $s \in [0:K_a-1]$. By combining \eqref{thm1eq3} and \eqref{thm1eq4}, the symbol $y_s$ comprises of $au_a=U+1$ message symbols corresponding to $U+1$ neighboring receivers. 
\begin{itemize}
\item The symbol $y_s$ comprises of $(U+1)$ message symbols one from each of the message vectors $x_t$ for every $t \in [as-(U+1)+a:as+a-1]$.
\item The symbol $y_{s+1}$ comprises of $U+1$ message symbols one from each of the message vectors $x_t$ for every $t \in [as-(U+1)+2a:as+2a-1]$.
\item The symbol $y_{s+\Delta_a}$ comprises of $U+1$ message symbols one from each of the message vectors $x_t$ for every $t \in [as-(U+1)+a(\Delta_a+1):as+a(\Delta_a+1)-1]$.
\item The symbol vector $[y_s~y_{s+1}~\ldots~y_{s+\Delta_a}]$ comprises of $(U+1)\Delta_a$ message symbols from the message vectors $x_t$ for every $t \in [as-(U+1)+a:as+a(\Delta_a+1)-1]=[\underbrace{as-(U+1)+a}_{=g}:\underbrace{as+D-U+a-1}_{=h}]$. \\Note that $h-g=D$.
\end{itemize}

From the alignment of the symbols $y_s,y_{s+1},\ldots,y_{s+\Delta_a}$ and the side-information given in \eqref{thm1eq}, we can conclude the following:
\begin{observation}
\label{obs1}
For every receiver $R_k$ whose wanted message symbol is present in $y_s$, every other message symbol in $y_s$ is in side-information of $R_k$.
\end{observation}
%Fig. \ref{vcfig1} is useful to understand the alignment of $y_s$. 
\begin{observation}
\label{obs2}
For every receiver $R_k$ whose wanted message is present in $y_s$, every message symbol in $y_{s+1},y_{s+2},\ldots,y_{s+\Delta_a}$ which does not belong to the message vector $x_k$ is in side-information of $R_k$. That is, for every receiver $R_k$ whose wanted message is present in $y_s$, every extended message symbol in $y_{s+1},y_{s+2},\ldots,y_{s+\Delta_a}$ which does not belong to the message vector $z_s$ is in side-information of $R_k$ (we refer the extended message symbol $z_{t,i}=\sum_{j=0}^{a-1} x_{at+j,i}$ is in the side-information of $R_k$ if every message symbol $x_{at+j,i}$ for $j \in [0:a-1]$ is in the side-information of $R_k$). 
\end{observation}

Consider the one-sided SUICP(SNCS) with $K_a$ messages $\{y_0,y_1,\cdots,y_{K_a-1}\}$, $K_a$ receivers and the receiver wanting the message $y_s$ and having the side-information given by ${\cal K}_s= \{y_{s+1},y_{s+2},\dots,y_{s+\Delta_a}\},$ for every $s \in [0:K_a-1]$. Let $\mathfrak{C}$ be the optimal length index code given by \eqref{thm1eq5}. In \cite{VaR2}, we proved that the code $\mathfrak{C}$ enables the decoding of $K_a$ message $y_0,y_1,\cdots,y_{K_a-1}$. That is, for every $s \in [0:K_a-1]$, with linear decoding there exists a linear combination of broadcast symbols $c_0,c_1, \cdots, c_{K_a-\Delta_a-1}$ to get the sum of the form  
\begin{align}
\label{vcsum}
S_s=y_s+b_s^{(1)}y_{s+a_s^{(1)}}+b_s^{(2)}y_{s+a_s^{(2)}}+\cdots+
b_s^{(d)}y_{s+a_s^{(d)}} 
\end{align} 
where $b_s^{(1)},b_s^{(2)},\ldots,b_s^{(d)} \in \mathbb{F}_q,d \leq \Delta_a$ and $1 \leq a_s^{(1)} <a_s^{(2)}< \ldots <a_s^{(d)} \leq \Delta_a$.

In the given two-sided SUICP(SNCS), we have $k \in [0:K-1]$, $s=\left \lfloor \frac{k}{a} \right  \rfloor \in [0:K_a-1].$ From \eqref{thm1eq3}, the message symbol $x_{k,i}$ for $i \in [1:u_a]$ is present in the extended message symbol $z_{s,i}$. For every message symbol $x_{k,i}$ for $i \in [1:u_a]$, the decoding is performed as given below.
\begin{itemize}
\item $R_k$ first decodes the extended message symbol $z_{s,i}$ for $i \in [1:u_a]$, where the message symbol $x_{k,i}$ is present.
\item In $z_{s,i}=\sum_{j=0}^{a-1} x_{as+j,i}$, every message symbol present is in the side-information of $R_k$. Hence, $R_k$ decodes its wanted message symbol $x_{k,i}$ from $z_{s,i}$.
\end{itemize}
\begin{step}
\label{step1}
Decoding of $x_{k,u_a}$
\end{step}

The message symbol $x_{k,u_a}$ is present in $z_{s,u_a}$. Receiver $R_k$ first decodes the extended message symbol $z_{s,u_a}$ as follows. From \eqref{thm1eq4}, we have 
\begin{align*}
y_{s+u_a-1}&=\sum_{i=1}^{u_a} z_{s+u_a-i,i}=z_{s,u_a}+\sum_{i=1}^{u_a-1} z_{s+u_a-i,i}.
\end{align*}
Hence, the extended message symbol $z_{s,u_a}$ is present in $y_{s+u_a-1}$. The symbol $y_{s+u_a-1}$ can be decoded from $S_{s+u_a-1}$. The sum $S_{s+u_a-1}$ obtained from \eqref{vcsum} is given in \eqref{vcstep10} and \eqref{vcstep1}.
\begin{figure*}[ht]
\begin{align}
\label{vcstep10}
S_{s+u_a-1}=y_{s+u_a-1}+\underbrace{\sum_{t=1}^d b_{s+u_a-1}^{(t)}y_{s+u_a-1+a_{s+u_a-1}^{(t)}}}_{\text{side~information~to}~R_k}.
\end{align}
\begin{align}
\label{vcstep1}
S_{s+u_a-1}=\underbrace{\underbrace{z_{s,u_a}}_{\text{wanted extended message symbol to} R_k} +\underbrace{\sum_{i=1}^{u_a-1} z_{s+u_a-i,i}}_{\text{side~information~to~receiver~}R_k}}_{y_{s+u_a-1}}+\underbrace{\sum_{t=1}^d b_{s+u_a-1}^{(t)}y_{s+u_a-1+a_{s+u_a-1}^{(t)}}}_{\text{side~information~to}~R_k}.
\end{align}
\end{figure*}
In $S_{s+u_a-1}$, we have 
\begin{align}
\label{vcstep1eq1}
1 \leq a_{s+u_a-1}^{(1)}<a_{s+u_a-1}^{(2)}< \cdots <a_{s+u_a-1}^{(d)}\leq \Delta_a. 
\end{align}
From \eqref{thm1eq4} and \eqref{vcstep1eq1}, we have 
\begin{align*}
y_{s+u_a-1+t}=\sum_{i=1}^{u_a}& z_{s+u_a+t-i,i}~\text{for}~s \in [0:K_a-1], \\&t \in \{a_{s+u_a-1}^{(1)},a_{s+u_a-1}^{(2)},\ldots,a_{s+u_a-1}^{(d)}\},
\end{align*}
and in $y_{s+u_a-1+t}$, the extended message symbols belonging to $z_s$ are not present. Hence, in $S_{s+u_a-1}$, only one extended message symbol is present which belongs to $z_s.$ From Observation \ref{obs2}, the symbols $y_{s+u_a-1+t}$ for every $t \in \{a_{s+u_a-1}^{(1)},a_{s+u_a-1}^{(2)},\ldots,a_{s+u_a-1}^{(d)}\}$ are in the side-information of $R_k$. From Observation \ref{obs1}, every message symbol present in $y_{s+u_a-1}$ is in side-information of $R_k$ except $x_{k,u_a}$. Hence, by using $S_{s+u_a-1}$, receiver $R_k$ decodes its wanted extended message symbol $z_{s,u_a}$.
From \eqref{thm1rem} and \eqref{thm1eq3}, we have 
\begin{align*}
z_{s,u_a}&=\sum_{j=0}^{a-1} x_{as+j,u_a} \\&=\sum_{j=0}^{a-1} x_{k-r+j,u_a}~\text{where}~r \in [0:a-1]\\&=x_{k,u_a}+\underbrace{\sum_{j=0,j \neq r}^{a-1} x_{k-r+j,u_a}}_{\text{side~information~to}~R_k}.
\end{align*}

Hence, $R_k$ decodes its wanted message $x_{k,u_a}$ by using $z_{s,u_a}$ (Note that if $S_{s+u_a-1}$ comprises $h \geq 2$ extended message symbols for some integer $h$ which belong to $z_s$ and if $R_k$ yet to decode $c$ of them for some integer $c$ such that $2 \leq c \leq h$, then the $c$ extended message symbols interfere and $R_k$ can not decode any of the $c$ message symbols).

%The decoding of the message symbol $z_{s,u_a}$ is depicted in Fig. \ref{vcfig2}; without loss of generality $a_{s+u_a-1,1}$ is assumed to be $1$.
%\begin{figure*}[t]
%\centering
%\includegraphics[scale=0.50]{vaddi5}
%%\includegraphics[scale=0.40]{isit2_1}
%\caption{Decoding of $z_{s,u_a}$.}
%\label{vcfig2}
%\end{figure*}

\begin{step}
\label{step2}
Decoding of $x_{k,u_a-1}$
\end{step}

The message symbol $x_{k,u_a-1}$ is present in the extended message symbol $z_{s,u_a-1}$. Receiver $R_k$ first decodes the extended message symbol $z_{s,u_a-1}$ as follows. From \eqref{thm1eq4}, we have 
\begin{align*}
y_{s+u_a-2}&=\sum_{i=1}^{u_a} z_{s+u_a-1-i,i}\\&=z_{s,u_a-1}+\sum_{i=1,i \neq u_a-1}^{u_a} z_{s+u_a-1-i,i}.
\end{align*}
Hence, the extended message symbol $z_{s,u_a-1}$ is present in $y_{s+u_a-2}$. Receiver $R_k$ uses the sum $S_{s+u_a-2}$ to decode $z_{s,u_a-1}$. The sum $S_{s+u_a-1}$ obtained from \eqref{vcsum} is given in \eqref{vcstep20} and \eqref{vcstep2}.
\begin{figure*}
\begin{align}
\label{vcstep20}
S_{s+u_a-2}=y_{s+u_a-2}+b_{s+u_a-2}^{(1)}y_{s+u_a-2+a_{s+u_a-2}^{(1)}}+\underbrace{\sum_{t=2}^d b_{s+u_a-2}^{(t)}y_{s+u_a-2+a_{s+u_a-2}^{(t)}}}_{\text{side~information~to}~R_k}.
\end{align}
\begin{align}
\label{vcstep2}
\nonumber
S_{s+u_a-2}=&\underbrace{\underbrace{z_{s,u_a-1}}_{\text{wanted~extended~message~symbol~to}~R_k}+\underbrace{\sum_{i=1,i \neq u_a-1}^{u_a} z_{s+u_a-1-i,i}
}_{\text{side~information~to~receiver~}R_k}}_{y_{s+u_a-2}}\\& +
\underbrace{\underbrace{b_{s+u_a-2}^{(1)}z_{s-1+a_{s+u_a-2}^{(1)},u_a}}_{\text{interference~from}~z_s~\text{if}~a_{s+u_a-2}^{(1)}=1}+\underbrace{b_{s+u_a-2}^{(1)}\sum_{i=1}^{u_a-1} z_{s+u_a-1+a_{s+u_a-2}^{(1)}-i,i}}_{\text{side~information~to}~R_k}}_{y_{s+u_a-2+a_{s+u_a-2}^{(1)}}}+\underbrace{\sum_{t=2}^d b_{s+u_a-2}^{(t)}y_{s+u_a-2+a_{s+u_a-2}^{(t)}}}_{\text{side~information~to}~R_k}.
\end{align}
\end{figure*}

In \eqref{vcstep20}, we have 
\begin{align}
\label{vcstep2eq1}
1 \leq a_{s+u_a-2}^{(1)}<a_{s+u_a-2}^{(2)}< \cdots <a_{s+u_a-2}^{(d)}\leq \Delta_a.
\end{align}

Depending on the value of $a_{s+u_a-2}^{(1)}$, sum $S_{s+u_a-2}$ comprises of either one extended message symbol or two extended message symbols belonging to $z_s~(z_{s,1},z_{s,2},\ldots,z_{s,u_a-1},z_{s,u_a})$. If $a_{s+u_a-2}^{(1)}>1$, then $S_{s+u_a-2}$ comprises only one extended message symbol ($z_{s,u_a-1}$) belonging to $z_s$. If $a_{s+u_a-2}^{(1)}=1$, $S_{s+u_a-2}$ comprises two extended message symbols ($z_{s,u_a},z_{s,u_a-1}$) belonging to $z_s.$ Receiver $R_k$ has already decoded the extended message symbol $z_{s,u_a}$ in Step \ref{step1}. From Observation \ref{obs2}, the symbols $y_{s+u_a-2+t}$ for every $t \in \{a_{s+u_a-2}^{(1)},a_{s+u_a-2}^{(2)},\ldots,a_{s+u_a-2}^{(d)}\}$ are in the side-information of $R_k$. From Observation \ref{obs1}, every message symbol present in $y_{s+u_a-2}$ is in side-information of $R_k$ except $x_{k,u_a-1}$. Hence, by using $S_{s+u_a-2}$, receiver $R_k$ decodes its wanted extended message symbol $z_{s,u_a-1}$. From \eqref{thm1rem} and \eqref{thm1eq3}, we have 
\begin{align*}
z_{s,u_a-1}&=\sum_{j=0}^{a-1} x_{as+j,u_a-1} \\&=\sum_{j=0}^{a-1} x_{k-r+j,u_a-1}~\text{where}~r \in [0:a-1]\\&=x_{k,u_a-1}+\underbrace{\sum_{j=0,j \neq r}^{a-1} x_{k-r+j,u_a-1}}_{\text{side~information~to}~R_k}.
\end{align*}

Hence, $R_k$ decodes its wanted message $x_{k,u_a-1}$ by using $z_{s,u_a-1}$.
%The decoding of the extended message symbol $z_{s,u_a-1}$ is depicted in Fig. \ref{vcfig3}; without loss of generality $a_{s+u_a-2,1}$ is assumed to be $1$.  
%
%\begin{figure*}[t]
%\centering
%\includegraphics[scale=0.50]{vaddi6}
%%\includegraphics[scale=0.40]{isit3_1}
%\caption{Decoding of $z_{s,u_a-1}$.}
%\label{vcfig3}
%\end{figure*}

\textit{Step l.}  Decoding of $x_{k,u_a+1-l}$ for $l \in [3:u_a]$

The message symbol $x_{k,u_a+1-l}$ is present in the extended message symbol $z_{s,u_a+1-l}$. Receiver $R_k$ first decodes the extended message symbol $z_{s,u_a+1-l}$ as follows. From \eqref{thm1eq4}, the extended message symbol $z_{s,u_a+1-l}$ is present in $y_{s+u_a-l}$. The symbol $y_{s+u_a-l}$ can be decoded from $S_{s+u_a-l}$. Define $A_j=\{a_{s+u_a-l}^{(j)}+1,a_{s+u_a-l}^{(j)}+2,\cdots,u_a\}$ if $u_a > a_{s+u_a-l}^{(j)}$, else $A_j=\phi$ for every $j \in [1:l-1]$. Let $1_A$ be the indicator function such that $1_A(x)=1$ if $x \in A$, else it is zero. The sum $S_{s+u_a-l}$ obtained from \eqref{vcsum} is given in \eqref{vcstepl0} and \eqref{vcstepl}.
\begin{figure*}[ht]
\begin{align}
\label{vcstepl0}
S_{s+u_a-l}&=y_{s+u_a-l}+\sum_{t=1}^{l-1}b_{s+u_a-l}^{(t)}y_{s+u_a-l+a_{s+u_a-l}^{(t)}}++\underbrace{\sum_{t=l}^d b_{s+u_a-l}^{(t)}y_{s+u_a-l+a_{s+u_a-l}^{(t)}}}_{\text{side~information~to}~R_k}.
\end{align}
\begin{align}
\nonumber
S_{s+u_a-l}=&\underbrace{\underbrace{z_{s,u_a+1-l}}_{\text{wanted~extended~message~symbol~to}~R_k}+ \underbrace{\sum\limits_{i=1, i\neq u_a+1-l}^{u_a} z_{s+u_a+1-l-i,i}
}_{\text{side~information~to}~R_k}}_{y_{s+u_a-l}}+\underbrace{\sum_{t=l}^{d}b_{s+u_a-l}^{(t)}y_{s+u_a-l+a_{s+u_a-l}^{(t)}}}_{\text{side~information~to}~R_k}\\&
+ \sum_{t=1}^{l-1}\underbrace{\underbrace{b_{s+u_a-l}^{(t)}1_{A_t}(l)z_{s,u_a+1-l+a_{s+u_a-l}^{(t)}}}_{\text{Interference~from}~z_s}+\underbrace{\sum_{t=1}^{l-1}b_{s+u_a-l}^{(t)}\left(y_{s+u_a-l+a_{s+u_a-l}^{(t)}}-1_{A_{t}}(l)z_{s,u_a+1-l+a_{s+u_a-l}^{(t)}}\right)}_{\text{side~information~to}~R_k}}_{y_{s+u_a-l+a_{s+u_a-l}^{(t)}}}.
\label{vcstepl}
\end{align}
\end{figure*}
In this sum $S_{s+u_a-l}$, $z_{s,u_a+1-l}$ is the required extended message symbol and $1_{A_j}(l)b_{s+u_a-l}^{(j)}z_{s,u_a+1-l+a_{s+u_a-l}^{(j)}}$ for $j \in [1:l-1]$ is the interference to the receiver $R_k$ from its wanted extended message vector $z_s$. Receiver $R_k$ already knows the $l-1$ extended message symbols $\{z_{s,u_a}, z_{s,u_a-1},\cdots,z_{s,u_a+2-l}\}$ and thus the interference from the extended message vector $z_s$ in the sum $S_{s+u_a-l}$ can be cancelled. From Observation \ref{obs2}, the symbols $y_{s+u_a-l+t}$ for every $t \in \{a_{s+u_a-l}^{(1)},a_{s+u_a-l}^{(2)},\ldots,a_{s+u_a-l}^{(d)}\}$ are in the side-information of $R_k$. From Observation \ref{obs1}, every message symbol present in $y_{s+u_a-l}$ is in side-information of $R_k$ except $x_{k,u_a+1-l}$. Hence, by using $S_{s+u_a-l}$, receiver $R_k$ decodes its wanted extended message symbol $z_{s,u_a+1-l}$. From \eqref{thm1rem} and \eqref{thm1eq3}, we have 
\begin{align*}
z_{s,u_a-l}&=\sum_{j=0}^{a-1} x_{as+j,u_a-l}=\sum_{j=0}^{a-1} x_{k-r+j,u_a-l}\\&=x_{k,u_a-l}+\underbrace{\sum_{j=0,j \neq r}^{a-1} x_{k-r+j,u_a-l}}_{\text{side~information~to}~R_k}.
\end{align*}

Hence, $R_k$ decodes its wanted message $x_{k,u_a-l}$ by using $z_{s,u_a-l}.$

The decoding continues until the receiver $R_k$ has decoded its $u_a$ wanted extended message symbols $z_{s,1},z_{s,2},\cdots,z_{s,u_a}$. The receiver $R_k$ decodes the extended message vector $z_s$ successively in the following order:
\begin{align*}
z_{s,u_a}\rightarrow z_{s,u_a-1}\rightarrow z_{s,u_a-2}\rightarrow \ldots \rightarrow z_{s,2}\rightarrow z_{s,1}.
\end{align*}  

Hence, The receiver $R_k$ decodes the wanted message $x_k$ successively in the following order:
\begin{align*}
x_{k,u_a}\rightarrow x_{k,u_a-1}\rightarrow x_{k,u_a-2}\rightarrow \ldots \rightarrow x_{k,2}\rightarrow x_{k,1}.
\end{align*}  

The extended message symbols wanted by receiver $R_k$ are decoded from $S_i$ as given in Table \ref{vctable1}. 
\begin{table*}[ht]
\centering
\begin{tabular}{|c|c|c|c|c|c|c|}
\hline
\textbf{$x_{k,u_a}$} & $x_{k,u_a-1}$ & $\ldots$ & $x_{k,u_a+1-l}$ & $\ldots$& $x_{k,2}$ & $x_{k,1}$\\
\hline
\textbf{$z_{s,u_a}$} & $z_{s,u_a-1}$ & $\ldots$ & $z_{s,u_a+1-l}$ & $\ldots$& $z_{s,2}$ & $z_{s,1}$\\
\hline
\textbf{$S_{s+u_a-1}$} & $S_{s+u_a-2}$ & $\ldots$& $S_{s+u_a-l}$ & $\ldots$ & $S_{s+1}$& $S_s$\\
\hline 
\end{tabular}
\caption{Sequential decoding of message vector from VLIC}
\label{vctable1}
\end{table*}

The number of broadcast symbols in the $u_a$ dimensional VLIC $\mathfrak{C}$ is $K_a-\Delta_a$. The rate achieved by $\mathfrak{C}$ is 
\begin{align*}
\frac{u_a}{K_a-\Delta_a}=\frac{\frac{U+1}{a}}{\frac{K}{a}-\frac{D-U}{a}}=\frac{U+1}{K-D+U}.
\end{align*}
This is equal to the capacity mentioned in \eqref{capacity}. Hence, the constructed VLICs are capacity achieving. 
\end{proof}

%%%%%%%%%%%%%%%%%%%%%%%%%%%%%%%%%%%%%%%%%%

\begin{remark}
The decoding procedure given in Theorem \ref{vcthm1} uses successive interference cancellation. That is, in the decoding, the receiver $R_k$ decode its wanted message symbols $(x_{k,1},x_{k,2},\ldots,x_{k,u_a})$ one at a time starting from $x_{k,u_a}$ and proceeds to decode $$x_{k,u_a-1}\rightarrow x_{k,u_a-2} \rightarrow \ldots \rightarrow x_{k,1}$$ after cancelling the interference from already decoded message symbols.
\end{remark}

In the Lemma \ref{lemmaminimal} given below, we prove that the constructed optimal VLICs in Theorem \ref{vcthm1} are minimal dimensional 
if $\text{gcd}(K,D-U,U+1)$ is equal to $\text{gcd}(K-D+U,U+1)$. 
\begin{lemma}
\label{lemmaminimal}
If $\text{gcd}(K,D-U,U+1)=\text{gcd}(K-D+U,U+1)$, then the constructed VLICs in Theorem \ref{vcthm1} are minimal dimensional.
\end{lemma}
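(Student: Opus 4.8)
The plan is to prove a dimension lower bound valid for \emph{every} capacity‑achieving VLIC, and then verify that the code built in Theorem~\ref{vcthm1} meets this bound exactly under the stated hypothesis. Throughout I work in the nontrivial regime $U+D\leq K-2$, where by \eqref{capacity} the symmetric capacity is $C=\frac{U+1}{K-D+U}$ (the degenerate case $U+D=K-1$, where $C=1$, must be excluded or treated separately, since there the rate expression of Theorem~\ref{vcthm1} is not the relevant capacity).

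\emph{Step 1: optimality pins down the length.} Suppose $\mathfrak{C}'$ is an optimal (capacity‑achieving) $b$‑dimensional VLIC of length $N$. By definition its rate equals the capacity, so $\frac{b}{N}=\frac{U+1}{K-D+U}$, giving $N=\frac{b(K-D+U)}{U+1}$. Since $N$ is a positive integer (it counts broadcast transmissions), we must have $(U+1)\mid b(K-D+U)$.

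\emph{Step 2: the number‑theoretic consequence.} Put $g=\text{gcd}(U+1,\,K-D+U)$ and write $U+1=g\,u'$, $K-D+U=g\,v'$ with $\text{gcd}(u',v')=1$. Then $(U+1)\mid b(K-D+U)$ is equivalent to $u'\mid b\,v'$, hence to $u'\mid b$; therefore $b\geq u'=\frac{U+1}{\text{gcd}(K-D+U,\,U+1)}$. Thus no optimal VLIC has dimension smaller than $\frac{U+1}{\text{gcd}(K-D+U,\,U+1)}$.

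\emph{Step 3: the construction attains the bound.} Theorem~\ref{vcthm1} gives an optimal VLIC of dimension $u_a=\frac{U+1}{a}$ with $a=\text{gcd}(K,\,D-U,\,U+1)$ as in \eqref{thm1eq1}. Since $K-D+U=K-(D-U)$, every common divisor of $K$, $D-U$ and $U+1$ divides both $K-D+U$ and $U+1$; hence $a\mid \text{gcd}(K-D+U,\,U+1)$, so $u_a=\frac{U+1}{a}\geq \frac{U+1}{\text{gcd}(K-D+U,\,U+1)}$, with equality precisely when $a=\text{gcd}(K-D+U,\,U+1)$. Under the hypothesis $\text{gcd}(K,D-U,U+1)=\text{gcd}(K-D+U,U+1)$ this equality holds, so the code of Theorem~\ref{vcthm1} attains the lower bound of Step~2 and is therefore of minimal dimension among all optimal VLICs.

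The step that needs the most care is Step~1: making precise that ``capacity‑achieving of dimension $b$'' forces the length to be the exact value $b(K-D+U)/(U+1)$ and that this value must be an integer, which is exactly the divisibility $(U+1)\mid b(K-D+U)$ that drives the whole argument. The remaining steps are elementary gcd manipulations.
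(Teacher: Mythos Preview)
Your proof is correct and follows essentially the same approach as the paper's: both arguments rest on the observation that for a capacity-achieving $b$-dimensional code the length $N=b(K-D+U)/(U+1)$ must be an integer, which forces $b$ to be a multiple of $(U+1)/\gcd(K-D+U,U+1)$, and then use the hypothesis to conclude that the construction of Theorem~\ref{vcthm1} attains this bound. The paper phrases this as a proof by contradiction invoking B\'ezout coefficients, whereas you give the direct divisibility argument, but the mathematical content is identical.
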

\begin{proof}
Let $a=\frac{U+1}{\text{gcd}(K-D+U,U+1)}, b=\frac{K-D+U}{\text{gcd}(K-D+U,U+1)}$. We have $\text{gcd}(a,b)=1$. For SUICP(SNCS), we have
\begin{align*}
C=\frac{U+1}{K-D+U}=\frac{\frac{U+1}{\text{gcd}(K-D+U,U+1)}}{\frac{K-D+U}{\text{gcd}(K-D+U,U+1)}}= \frac{a}{b}.
\end{align*}
If $a=1$, then the index code is a scalar linear index code and it is the minimal dimensional index code. Hence, with out loss of generality, we assume $a>1$. Consider there exist an $a-t$ dimensional capacity achieving VLIC for some $t \in [1:a-1]$. This $a-t$ dimensional VLIC is obtained by mapping $(a-t)K$ message symbols into $b-s$ broadcast symbols for some $s \in [1:b-1]$. This code is capacity achieving and hence we have,
\begin{align}
\label{lneweq1}
\frac{a-t}{b-s}=\frac{a}{b}
\end{align}
 which gives $as-bt=0.$ From the extended Euclid algorithm and Bezout coefficients, the equation of the form $as-bt=0$ has solutions of the form $s=k\frac{b}{\text{gcd}(a,b)}$ and $t=k\frac{a}{\text{gcd}(a,b)}$ for $k \in Z_{>0}$. We have $\text{gcd}(a,b)=1$ and hence $s=kb \geq b$ and $t=ka \geq a$. This is a contradiction. Hence, the dimension used in Theorem \ref{vcthm1} $$a=\frac{U+1}{\text{gcd}(K-D+U,U+1)}=\frac{U+1}{\text{gcd}(K,D-U,U+1)}$$ is minimal.
\end{proof}
For all SUICP(SCNC) not satisfying the condition $\text{gcd}(K,D-U,U+1)=\text{gcd}(K-D+U,U+1)$, we conjecture that the constructed optimal VLICs in Theorem \ref{vcthm1} are minimal dimensional.
\subsection{Low Complexity Decoding}
In \cite{VaR3}, we gave a low-complexity decoding for one-sided SUICP(SNCS) with AIR matrix as encoding matrix. The low complexity decoding method helps to identify a reduced set of side-information for each user with which the decoding can be carried out. By this method every receiver is able to decode its wanted message symbol by simply adding some broadcast symbols. The low complexity decoding of one-sided SUICP(SNCS) with $K_a$ messages and $\Delta_a$ side-information explicitly gives the values of $a_t^{(d)}$ and $b_t^{(d)}$ in \eqref{vcsum} for $t \in [0:K_a-1]$ and $d \in [1:\Delta_a]$. Hence, from the proof of Theorem \ref{vcthm1} we get a  low complexity decoding for two-sided SUICP(SNCS) with arbitrary $K,D$ and $U$.	

The following examples illustrate the construction of optimal VLICs and low complexity decoding for two-sided  SUICP(SNCS).

%\subsection{Examples}

\begin{example}
\label{vcex1}
Consider the two-sided SUICP(SNCS) with $K=8, D=2, U=1$. For this SUICP(SNCS), $\text{gcd}(K,D-U,U+1)=a=1$. The optimal length index code is a two dimensional index code. For this two-sided SUICP(SNCS), we have $K_a=8$ and $\Delta_a=1$. The AIR matrix of size $\mathbf{L}_{8 \times 7}$ is given below.
\begin{figure}[ht]
\arraycolsep=1pt
{
$$\mathbf{L}_{8 \times 7}=\left[\begin{array}{*{20}c}
   1 & 0 & 0 & 0 & 0 & 0 & 0\\
   0 & 1 & 0 & 0 & 0 & 0 & 0\\
   0 & 0 & 1 & 0 & 0 & 0 & 0\\
   0 & 0 & 0 & 1 & 0 & 0 & 0\\
   0 & 0 & 0 & 0 & 1 & 0 & 0\\
   0 & 0 & 0 & 0 & 0 & 1 & 0\\
   0 & 0 & 0 & 0 & 0 & 0 & 1\\
 1 & 1 & 1 & 1 &1 & 1 & 1\\
   \end{array}\right]$$
}

\end{figure}

The scalar linear index code for the one-sided SUICP(SNCS) with $K_a=8$  and $\Delta_a=1$ is given by 
\begin{align*}
\mathfrak{C}=\{&y_0+y_7, ~~~y_1+y_7,~~~y_2+y_7, ~~~y_3+y_7, \\& y_4+y_7, ~~~y_5+y_7, ~~~y_6+y_7\}. 
\end{align*}

The VLIC for the given two-sided SUICP(SNCS) is obtained by replacing $y_s$ in $\mathfrak{C}$ with $x_{s,1}+x_{s-1,2}$ for $s \in [0:7]$. The VLIC is given by
\begin{align*}
\mathfrak{C}^{(2s)}=\{&x_{0,1}+x_{7,2}+x_{7,1}+x_{6,2}, ~x_{1,1}+x_{0,2}+x_{7,1}+x_{6,2}, \\&x_{2,1}+x_{1,2}+x_{7,1}+x_{6,2},~x_{3,1}+x_{2,2}+x_{7,1}+x_{6,2}, \\&x_{4,1}+x_{3,2}+x_{7,1}+x_{6,2}, ~x_{5,1}+x_{4,2}+x_{7,1}+x_{6,2}, \\&x_{6,1}+x_{5,2}+x_{7,1}+x_{6,2}\}. 
\end{align*}

Let $\tau_s$ be the set of broadcast symbols required to decode $y_s$ from $\mathfrak{C}$. The values of $\tau_s$ and $S_s$ are given in Table \ref{vctableex122}.
\begin{table}[ht]
%$\label{table1}
\centering
%\setlength\extrarowheight{1.25pt}
%\begin{small}
\begin{tabular}{|c|c|c|}
\hline
{$\mathcal{W}_s$} & $\tau_s$ & $S_s$ \\
\hline
$y_0$ &$y_0+y_7,y_1+y_7$& $S_0=y_0+y_1$ \\
\hline
$y_1$&$y_1+y_7,y_2+y_7$&  $S_1=y_1+y_2$ \\
\hline 
$y_2$ &$y_2+y_7,y_3+y_7$&$S_2=y_2+y_3$  \\
\hline
$y_3$ &$y_3+y_7,y_4+y_7$&$S_3=y_3+y_4$  \\
\hline
$y_4$ &$y_4+y_7,y_5+y_7$& $S_4=y_4+y_5$  \\
\hline
$y_5$ &$y_5+y_7,y_6+y_7$&$S_5=y_5+y_6$ \\
\hline
$y_6$ &$y_6+y_7$&$S_6=y_6+y_7$ \\
\hline
$y_7$ &$y_0+y_7$&$S_7=y_7+y_0 $\\
\hline 
\end{tabular}
%\end{small}
\caption{Decoding at each receiver for one-sided SUICP(SNCS) with $K=8,\Delta_a=1$}
\label{vctableex122}
\end{table}

Receiver $R_k$ wants to decode $x_{k,1}$ and $x_{k,2}$ for $k \in [0:7]$. $R_k$ first decode $x_{k,2}$ from $S_{s+1}$, then decodes $x_{k,1}$ from $S_s$ for every $k \in [0:7]$. The decoding procedure at the receivers $R_0$, $R_1\ldots R_7$ is shown in Table \ref{vctableex121}.

\begin{table}[ht]
%$\label{table1}
\centering
%\setlength\extrarowheight{1.25pt}
%\begin{small}
\begin{tabular}{|c|c|c|c|}
\hline
$R_k$ & {$\mathcal{W}_k$} & Sum from which $x_{k,i}$ is decoded \\
\hline
$R_0$ & $x_{0,2}$&  $S_1=x_{1,1}+x_{0,2}+x_{2,1}+x_{1,2}$ \\
\hline 
$R_0$ & $x_{0,1}$ &$S_0=x_{0,1}+x_{7,2}+x_{1,1}+x_{0,2}$  \\
\hline
$R_1$ & $x_{1,2}$&  $S_2=x_{2,1}+x_{1,2}+x_{3,1}+x_{2,2}$ \\
\hline 
$R_1$ & $x_{1,1}$ &$S_1=x_{1,1}+x_{0,2}+x_{2,1}+x_{1,2}$  \\
\hline
$R_2$ & $x_{2,2}$&  $S_3=x_{3,1}+x_{2,2}+x_{4,1}+x_{3,2}$ \\
\hline 
$R_2$ & $x_{2,1}$ &$S_2=x_{2,1}+x_{1,2}+x_{3,1}+x_{2,2}$  \\
\hline
$R_3$ & $x_{3,2}$&  $S_4=x_{4,1}+x_{3,2}+x_{5,1}+x_{4,2}$ \\
\hline 
$R_3$ & $x_{3,1}$ &$S_3=x_{3,1}+x_{2,2}+x_{4,1}+x_{3,2}$  \\
\hline
$R_4$ & $x_{4,2}$&  $S_5=x_{5,1}+x_{4,2}+x_{6,1}+x_{5,2}$ \\
\hline 
$R_4$ & $x_{4,1}$ &$S_4=x_{4,1}+x_{3,2}+x_{5,1}+x_{4,2}$  \\
\hline
$R_5$ & $x_{5,2}$&  $S_6=x_{6,1}+x_{5,2}+x_{7,1}+x_{6,2}$ \\
\hline 
$R_5$ & $x_{5,1}$ &$S_5=x_{5,1}+x_{4,2}+x_{6,1}+x_{5,2}$  \\
\hline
$R_6$ & $x_{6,2}$&  $S_7=x_{7,1}+x_{6,2}+x_{0,1}+x_{7,2}$ \\
\hline 
$R_6$ & $x_{6,1}$ &$S_6=x_{6,1}+x_{5,2}+x_{7,1}+x_{6,2}$  \\
\hline
$R_7$ & $x_{7,2}$&  $S_0=x_{0,1}+x_{7,2}+x_{1,1}+x_{0,2}$ \\
\hline 
$R_7$ & $x_{7,1}$ &$S_{7}=x_{7,1}+x_{6,2}+x_{0,1}+x_{7,2}$  \\
\hline
\end{tabular}
%\end{small}
\caption{Decoding at each receiver for the two-sided SUICP(SNCS) with $K=8,D=2$ and $U=1$}
\label{vctableex121}
\end{table}

The capacity of the given two-sided SUICP(SNCS) is $\frac{2}{7}$. The VLIC $\mathfrak{C}$ uses seven broadcast symbols to transmit two message symbols per receiver. The rate achieved by $\mathfrak{C}$ is $\frac{2}{7}$. Hence, $\mathfrak{C}^{(2)}$ is an optimal length index code. 
\end{example}

%\hrule 
\begin{example}
\label{vcex2}
Consider the two-sided SUICP(SNCS) with $K=22, D=7, U=3$. For this SUICP(SNCS), $\text{gcd}(K,D-U,U+1)=a=2$. The optimal length index code is a two dimensional index code. For this two-sided SUICP(SNCS), we have $K_a=11$  and $\Delta_a=2$. The AIR matrix of size $\mathbf{L}_{11 \times 9}$ is given below.
\begin{figure}[ht]
\arraycolsep=1pt
{
$$\mathbf{L}_{11 \times 9}=\left[\begin{array}{*{20}c}
   1 & 0 & 0 & 0 & 0 & 0 & 0 & 0 & 0\\
   0 & 1 & 0 & 0 & 0 & 0 & 0 & 0 & 0\\
   0 & 0 & 1 & 0 & 0 & 0 & 0 & 0 & 0\\
   0 & 0 & 0 & 1 & 0 & 0 & 0 & 0 & 0\\
   0 & 0 & 0 & 0 & 1 & 0 & 0 & 0 & 0\\
   0 & 0 & 0 & 0 & 0 & 1 & 0 & 0 & 0\\
   0 & 0 & 0 & 0 & 0 & 0 & 1 & 0 & 0\\
   0 & 0 & 0 & 0 & 0 & 0 & 0 & 1 & 0\\
   0 & 0 & 0 & 0 & 0 & 0 & 0 & 0 & 1\\
   1 & 0 & 1 & 0 & 1 & 0 & 1 & 0 & 1\\
   0 & 1 & 0 & 1 & 0 & 1 & 0 & 1 & 1\\
   \end{array}\right]$$
}

\end{figure}

The scalar linear index code for the one-sided SUICP(SNCS) with $K_a=11,\Delta_a=2$ is given by 
\begin{align*}
\mathfrak{C}=\{&y_0+y_9, ~~~y_1+y_{10},~~~y_2+y_9,~~~y_3+y_{10}, \\&y_4+y_9, ~~~y_5+y_{10}, ~~~y_6+y_9,~~~y_7+y_{10}, \\&y_8+y_9+y_{10}\}. 
\end{align*}

The VLIC for the given two-sided SUICP(SNCS) is obtained by replacing $y_s$ in $\mathfrak{C}$ with $x_{2s,1}+x_{2s+1,1}+x_{2(s-1),2}+x_{2(s-1)+1,2}$ for $s \in [0:10]$. The two dimensional VLIC for the is given SUICP(SNCS) is given in Table \ref{vcex2table1}.
\begin{table*}[ht]
%$\label{table1}
\centering
\setlength\extrarowheight{0.5pt}
%\begin{small}
\begin{tabular}{|c|c|}
\hline
\textbf{$c_0$} & $\underbrace{x_{0,1}+x_{1,1}+x_{20,2}+x_{21,2}}_{y_0}+\underbrace{x_{18,1}+x_{19,1}+x_{16,2}+x_{17,2}}_{y_9}$\\
\hline
\textbf{$c_1$} & $\underbrace{x_{2,1}+x_{3,1}+x_{0,2}+x_{1,2}}_{y_1}+\underbrace{x_{20,1}+x_{21,1}+x_{0,2}+x_{1,2}}_{y_{10}}$ \\
\hline 
\textbf{$c_2$} & $\underbrace{x_{4,1}+x_{5,1}+x_{2,2}+x_{3,2}}_{y_2}+\underbrace{x_{18,1}+x_{19,1}+x_{16,2}+x_{17,2}}_{y_9}$  \\
\hline
\textbf{$c_3$} & $\underbrace{x_{6,1}+x_{7,1}+x_{4,2}+x_{5,2}}_{y_3}+\underbrace{x_{20,1}+x_{21,1}+x_{0,2}+x_{1,2}}_{y_{10}}$ \\
\hline
\textbf{$c_4$} & $\underbrace{x_{8,1}+x_{9,1}+x_{6,2}+x_{7,2}}_{y_4}+\underbrace{x_{18,1}+x_{19,1}+x_{16,2}+x_{17,2}}_{y_{9}}$ \\
\hline
\textbf{$c_5$} & $\underbrace{x_{10,1}+x_{11,1}+x_{8,2}+x_{9,2}}_{y_5}+\underbrace{x_{20,1}+x_{21,1}+x_{0,2}+x_{1,2}}_{y_{10}}$ \\
\hline
\textbf{$c_6$} & $\underbrace{x_{12,1}+x_{13,1}+x_{10,2}+x_{11,2}}_{y_6}+\underbrace{x_{18,1}+x_{19,1}+x_{16,2}+x_{17,2}}_{y_9}$ \\
\hline
\textbf{$c_7$} & $\underbrace{x_{14,1}+x_{15,1}+x_{12,2}+x_{13,2}}_{y_7}+\underbrace{x_{20,1}+x_{21,1}+x_{0,2}+x_{1,2}}_{y_{10}}$ \\
\hline 
\textbf{$c_8$} & $\underbrace{x_{16,1}+x_{17,1}+x_{14,2}+x_{15,2}}_{y_8}+\underbrace{x_{18,1}+x_{19,1}+x_{16,2}+x_{17,2}}_{y_9}+
\underbrace{x_{20,1}+x_{21,1}+x_{0,2}+x_{1,2}}_{y_{10}}$ \\
\hline 
\end{tabular}
%\end{small}
\caption{Vector linear index code for the two-sided SUICP(SNCS) given in Example \ref{vcex2}}
\label{vcex2table1}
\end{table*}

Let $\tau_s$ be the set of broadcast symbols required to decode $y_s$ from $\mathfrak{C}$. The values of $\tau_s$ and $S_s$ for $s \in [0:10]$ are given in Table \ref{vctableex14}.
\begin{table}[ht]
%$\label{table1}
\centering
%\setlength\extrarowheight{1.25pt}
%\begin{small}
\begin{tabular}{|c|c|c|}
\hline
{$\mathcal{W}_s$} & $\tau_s$ & $S_s$ \\
\hline
$y_0$ &$y_0+y_9,y_2+y_9$& $S_0=y_0+y_3$ \\
\hline
$y_1$&$y_1+y_{10},y_3+y_{10}$&  $S_1=y_1+y_3$ \\
\hline 
$y_2$ &$y_2+y_9,y_4+y_9$&$S_2=y_2+y_4$  \\
\hline
$y_3$ &$y_3+y_{10},y_5+y_{10}$&$S_3=y_3+y_5$  \\
\hline
$y_4$ &$y_4+y_9,y_6+y_9$& $S_4=y_4+y_6$  \\
\hline
$y_5$ &$y_5+y_{10},y_7+y_{10}$&$S_5=y_5+y_7$ \\
\hline
$y_6$ &$y_6+y_9,y_7+y_{10},$&$S_6=y_6+y_7+y_8$ \\
&$y_8+y_9+y_{10}$&\\
\hline
$y_7$ &$y_7+y_{10},y_8+y_9+y_{10}$&$S_7=y_7+y_8+y_9 $\\
\hline 
$y_8$ &$y_8+y_9+y_{10}$&$S_8=y_8+y_9+y_{10}$ \\
\hline
$y_9$ &$y_0+y_9$&$S_9=y_0+y_9$ \\
\hline
$y_{10}$ &$y_1+y_{10}$&$S_{10}=y_1+y_{10} $\\
\hline 
\end{tabular}
%\end{small}
\caption{Decoding at each receiver for one-sided SUICP(SNCS) with $K_a=11,\Delta_a=2$}
\label{vctableex14}
\end{table}

Receiver $R_k$ wants to decode $x_{k,1}$ and $x_{k,2}$. $R_k$ first decode $x_{k,2}$ from $S_{s+1}$, then decodes $x_{k,1}$ from $S_s$ for every $k \in [0:21]$. The receiver $R_{14}$ decodes $x_{14,2}$ from $S_8$ and decodes $x_{14,1}$ from $S_7.$ From Table \ref{vctableex14}, we have 
\begin{align*}
&S_8=y_8+y_9+y_{10}=\underbrace{x_{16,1}+x_{17,1}+x_{14,2}+x_{15,2}}_{y_8}+\\&\underbrace{x_{18,1}+x_{19,1}+x_{16,2}+x_{17,2}}_{y_9}+
\underbrace{x_{20,1}+x_{21,1}+x_{18,2}+x_{19,2}}_{y_{10}}.
\end{align*} 

In $S_8$, all other messages except $x_{14,2}$ are in the side information of $R_{14}$. Hence $R_{14}$ decodes $x_{14,2}$ from $S_8$. From Table \ref{vctableex14}, we have 
\begin{align*}
&S_7=y_7+y_8+y_9=\underbrace{x_{14,1}+x_{15,1}+x_{12,2}+x_{13,2}}_{y_7}+\\&\underbrace{x_{16,1}+x_{17,1}+x_{14,2}+x_{15,2}}_{y_8}+
\underbrace{x_{18,1}+x_{19,1}+x_{16,2}+x_{17,2}}_{y_9}.
\end{align*} 

In $S_7$, all other messages except $x_{14,1}$ and $x_{14,2}$ are in the side information of $R_{14}$. $R_{14}$ already decoded $x_{14,2}$ and hence it cancels the interference in $S_7$ due to $x_{14,2}$ and then decodes $x_{14,1}$. 

The capacity of the given two-sided SUICP(SNCS) is $\frac{4}{18}=\frac{2}{9}$. The VLIC $\mathfrak{C}$ uses nine broadcast symbols to transmit two message symbols per receiver. The rate achieved by $\mathfrak{C}$ is $\frac{2}{9}$.
\end{example}
%\hrule
\begin{note}
For the two-sided SUICP(SNCS) given in Example \ref{vcex2}, the construction given in \cite{VaR1} gives four dimensional optimal length VLIC, whereas the construction given in this paper gives two dimensional optimal length VLIC.
\end{note}
%\hrule
\begin{example}
\label{vcex3}
Consider the two-sided SUICP(SNCS) with $K=24,D=11$ and $U=2$. For this SUICP(SNCS), we have $\text{gcd}(K,D-U,U+1)=3$. The optimal length index code for this SUICP(SNCS) is scalar linear index codes. For this two-sided SUICP(SNCS), we have $K_a=8,\Delta_a=3$. AIR matrix of size $8 \times 5$ is given below.
\arraycolsep=1pt
\setlength\extrarowheight{-2.0pt}
{
$$\mathbf{L}_{8 \times 5}=\left[\begin{array}{*{20}c}
   1 & 0 & 0 & 0 & 0 \\
   0 & 1 & 0 & 0 & 0 \\
   0 & 0 & 1 & 0 & 0 \\
   0 & 0 & 0 & 1 & 0 \\
   0 & 0 & 0 & 0 & 1 \\
   1 & 0 & 0 & 1 & 0 \\
   0 & 1 & 0 & 0 & 1 \\
   0 & 0 & 1 & 1 & 1 \\
  \end{array}\right]$$
}

The scalar linear code $\mathfrak{C}$ for the one-sided SUICP(SNCS) with $K_a=8$ and $\Delta_a=3$ is 
\begin{align*}
\mathfrak{C}&=\{c_0,c_1.c_2,c_3,c_4\}\\&=\{y_0+y_5, ~y_1+y_6, ~y_2+y_7, ~y_3+y_5+y_7, ~y_4+y_6+y_7\}.
\end{align*}
 
The scalar linear index code $\mathfrak{C}^{(2s)}$ for the given two-sided SUICP(SNCS) is given by
\begin{align*}
&\mathfrak{C}^{(2s)}=\{c_0,c_1,c_2,c_3,c_4\}=\\\{&\underbrace{x_0+x_1+x_2}_{y_0}+
\underbrace{x_{15}+x_{16}+x_{17}}_{y_5},\\&\underbrace{x_3+x_4+x_5}_{y_1}+\underbrace{x_{18}+x_{19}+x_{20}}_{y_6},\\&\underbrace{x_6+x_7+x_8}_{y_2}+\underbrace{x_{21}+x_{22}+x_{23}}_{y_7},\\& \underbrace{x_9+x_{10}+x_{11}}_{y_3}+\underbrace{x_{15}+x_{16}+x_{17}}_{y_5}+\underbrace{x_{21}+x_{22}+x_{23}}_{y_7},\\& \underbrace{x_{12}+x_{13}+x_{14}}_{y_4}+\underbrace{x_{18}+x_{19}+x_{20}}_{y_6}+\underbrace{x_{21}+x_{22}+x_{23}}_{y_7}\}.
\end{align*}
\end{example}
%\hrule
\begin{note}
For the two-sided SUICP(SNCS) given in Example \ref{vcex3}, the construction given in  \cite{VaR1} gives three dimensional optimal length VLIC, whereas the construction given in this paper gives  optimal length scalar linear index code.
\end{note}
%\hrule
\subsection{Encoding Matrices for Optimal Vector Linear Codes}
In this subsection, we give encoding matrices for optimal VLICs of two-sided SUICP(SNCS).  with $K$ messages, $D$ side-information after and $U$ side-information before the desired message. 

\begin{definition}
\label{vcdef5}
Let $\mathbf{L}$ be the AIR matrix of size $K_a \times (K_a-\Delta_a)$. Let $L_s$ be the $s$th row of $\mathbf{L}$ for $s \in [0:K_a-1]$. Then, the $s$th precoding matrix $\mathbf{V}_s$ is defined as 
$$\mathbf{V}_{s}=\left[\begin{array}{*{20}c}
   L_s  \\
   L_{s+1} \\
   L_{s+2} \\
    \vdots \\
   L_{s+u_a-1} \\
   \end{array}\right].$$
The matrix $\mathbf{\tilde{V}}_s$ is defined as 
\begin{align}
\label{precoding}
\mathbf{\tilde{V}}_{s}=\left.\left[\begin{array}{*{20}c}
   \mathbf{V}_s  \\
   \mathbf{V}_s  \\
   \vdots  \\
   \mathbf{V}_s 
   \end{array}\right]\right\rbrace a~\text{number~of}~ \mathbf{V}_s~\text{matrices}
\end{align}
\end{definition}
\begin{lemma}
\label{lemma1}
Consider the two-sided SUICP(SNCS) with $K$ messages, $U$ side-information after and $D$ side-information before the desired message. Let $x_k=(x_{k,1},x_{k,2},\cdots,x_{k,u_a})$ be the message vector wanted by the $k$th receiver, where $x_k \in \mathbb{F}_q^{u_a}, x_{k,i} \in \mathbb{F}_q$ for $k \in [0:K-1]$ and $i \in [1:u_a]$. For this two-sided SUICP(SNCS), an encoding matrix for optimal length VLIC $\mathbf{L}^{(2s)}$ of size $Ku_a \times (K_a-\Delta_a)$ is given by 
$$\mathbf{L}^{(2s)}=\left.\left[\begin{array}{*{20}c}
   \mathbf{\tilde{V}}_0  \\
   \mathbf{\tilde{V}}_1  \\
   \mathbf{\tilde{V}}_2  \\
   \vdots  \\
   \mathbf{\tilde{V}}_{K_a-1}   \\
   \end{array}\right]\right. $$
where $\mathbf{\tilde{V}}_k$ is defined in Definition \ref{vcdef5}.
\end{lemma}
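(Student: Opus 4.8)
The plan is to show directly that the matrix $\mathbf{L}^{(2s)}$ generates exactly the code $\mathfrak{C}$ of Theorem \ref{vcthm1}; once that is established, optimality of the length and the validity of the decoding are inherited from that theorem with nothing further to prove. Write the $Ku_a$ message symbols as the row vector
$$\mathbf{m}=\bigl(x_{0,1},\ldots,x_{0,u_a},\;x_{1,1},\ldots,x_{1,u_a},\;\ldots,\;x_{K-1,1},\ldots,x_{K-1,u_a}\bigr),$$
so that the code generated by $\mathbf{L}^{(2s)}$ is the row vector $\mathbf{m}\,\mathbf{L}^{(2s)}$. First I would verify the size: $\mathbf{V}_s$ is $u_a\times(K_a-\Delta_a)$, hence $\mathbf{\tilde{V}}_s$, being $a$ stacked copies of $\mathbf{V}_s$, is $au_a\times(K_a-\Delta_a)=(U+1)\times(K_a-\Delta_a)$, and stacking $\mathbf{\tilde{V}}_0,\ldots,\mathbf{\tilde{V}}_{K_a-1}$ gives $K_a(U+1)=Ku_a$ rows and $K_a-\Delta_a$ columns, as claimed.

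Next I would match the row-blocks of $\mathbf{L}^{(2s)}$ to blocks of $\mathbf{m}$. The block $\mathbf{\tilde{V}}_s$ occupies rows $s(U+1),\ldots,(s+1)(U+1)-1$ of $\mathbf{L}^{(2s)}$; since $U+1=au_a$, so that $u_a$ divides $s(U+1)=sau_a$, the matching entries of $\mathbf{m}$ are exactly the symbols of the $a$ consecutive message vectors $x_{as},x_{as+1},\ldots,x_{as+a-1}$, listed as $x_{as,1},\ldots,x_{as,u_a},\ldots,x_{as+a-1,1},\ldots,x_{as+a-1,u_a}$. As $\mathbf{\tilde{V}}_s$ is $a$ copies of the rows $L_s,L_{s+1},\ldots,L_{s+u_a-1}$ (row indices of $\mathbf{L}$ read modulo $K_a$), the contribution of this block to $\mathbf{m}\,\mathbf{L}^{(2s)}$ is
$$\sum_{j=0}^{a-1}\sum_{i=1}^{u_a}x_{as+j,i}\,L_{s+i-1}=\sum_{i=1}^{u_a}\Bigl(\sum_{j=0}^{a-1}x_{as+j,i}\Bigr)L_{s+i-1}=\sum_{i=1}^{u_a}z_{s,i}\,L_{s+i-1},$$
where the last equality is the definition \eqref{thm1eq3} of $z_{s,i}$.

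Summing over $s\in[0:K_a-1]$ gives $\mathbf{m}\,\mathbf{L}^{(2s)}=\sum_{s=0}^{K_a-1}\sum_{i=1}^{u_a}z_{s,i}L_{s+i-1}$, and the remaining step is a change of the summation index: for each fixed $i\in[1:u_a]$ the map $s\mapsto t=(s+i-1)\bmod K_a$ permutes $[0:K_a-1]$, so
$$\sum_{s=0}^{K_a-1}\sum_{i=1}^{u_a}z_{s,i}L_{s+i-1}=\sum_{t=0}^{K_a-1}\Bigl(\sum_{i=1}^{u_a}z_{t+1-i,i}\Bigr)L_t=\sum_{t=0}^{K_a-1}y_tL_t,$$
the last equality being \eqref{thm1eq4}; this is precisely the code \eqref{thm1eq5}. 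Hence $\mathbf{L}^{(2s)}$ is an encoding matrix for the optimal-length $u_a$-dimensional VLIC of the given two-sided SUICP(SNCS), and the decoding of Theorem \ref{vcthm1} applies verbatim. I do not expect a genuine obstacle here: the whole argument is bookkeeping, and the only subtlety is to keep the indices $i$ and $j$ separate and to remember that the row indices of $\mathbf{L}$ wrap around modulo $K_a$ rather than $K$.
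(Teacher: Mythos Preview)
Your argument is correct and follows essentially the same route as the paper's own proof: both show that $\mathbf{m}\,\mathbf{L}^{(2s)}=\sum_{s}y_sL_s$ by identifying the block $\mathbf{\tilde{V}}_s$ with the message vectors $x_{as},\ldots,x_{as+a-1}$ and then reindexing the double sum. The only difference is presentational---you make the substitution $t=s+i-1$ explicit, whereas the paper absorbs it into the step $\sum_{j}\bigl(\sum_{i}x_{a(s+1-i)+j,i}L_s\bigr)=\sum_{j}\mathbf{x}_{as+j}\mathbf{V}_s$.
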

\begin{proof}
Let $\mathbf{x}_k=[x_{k,1}~x_{k,2}~\ldots~x_{k,u_a}]$. Let $\mathbf{x}_{[a:b]}=[\mathbf{x}_a~\mathbf{x}_{a+1}~\ldots~\mathbf{x}_b]$. From Theorem \ref{vcthm1}, the $u_a$ dimensional VLIC for the given two-sided SUICP(SNCS) is given by 
\begin{align*}
[c_0~c_1\ldots &c_{K_a-\Delta_a-1}]=\sum_{s=0}^{K_a-1}y_sL_s\\&=\sum_{s=0}^{K_a-1} \left(\sum_{i=1}^{u_a} \sum_{j=0}^{a-1}x_{a(s+1-i)+j,i}\right)L_s \\&=\sum_{s=0}^{K_a-1} (\sum_{j=0}^{a-1} \left(\sum_{i=1}^{u_a}x_{a(s+1-i)+j,i} L_s\right) \\&=\sum_{s=0}^{K_a-1} \left(\sum_{j=0}^{a-1} \mathbf{x}_{as+j}\mathbf{V}_s \right) \\&= \sum_{s=0}^{K_a-1} \mathbf{x}_{[as:as+a-1]} \mathbf{\tilde{V}_s}= \mathbf{x}_{[0:K-1]}\mathbf{L}^{(2s)}.
\end{align*}

This completes the proof.
\end{proof}
\begin{example}
\label{vcex4}
Consider the two-sided SUICP(SNCS) with $K=8,D=2,U=1$. The optimal length VLIC for this SUICP is given in Example \ref{vcex1}. The encoding matrix $\mathbf{L}_{16 \times 7}^{(2s)}$ for this optimal length VLIC by using Lemma \ref{lemma1} is given in Fig \ref{encodingmatrix2}.
\end{example}
\begin{figure}
\arraycolsep=1pt
\setlength\extrarowheight{-2.0pt}
{
$$\mathbf{L}_{16 \times 7}^{(2s)}=\left[\begin{array}{*{20}c}
   1 & 0 & 0 & 0 & 0 & 0 & 0\\
   0 & 1 & 0 & 0 & 0 & 0 & 0\\
   0 & 1 & 0 & 0 & 0 & 0 & 0\\
   0 & 0 & 1 & 0 & 0 & 0 & 0\\
   0 & 0 & 1 & 0 & 0 & 0 & 0\\
   0 & 0 & 0 & 1 & 0 & 0 & 0\\
   0 & 0 & 0 & 1 & 0 & 0 & 0\\
   0 & 0 & 0 & 0 & 1 & 0 & 0\\
   0 & 0 & 0 & 0 & 1 & 0 & 0\\
   0 & 0 & 0 & 0 & 0 & 1 & 0\\
   0 & 0 & 0 & 0 & 0 & 1 & 0\\
   0 & 0 & 0 & 0 & 0 & 0 & 1\\
   0 & 0 & 0 & 0 & 0 & 0 & 1\\
   1 & 1 & 1 & 1 & 1 & 1 & 1\\
   1 & 1 & 1 & 1 & 1 & 1 & 1\\
   1 & 0 & 0 & 0 & 0 & 0 & 0\\
   \end{array}\right]$$
}
\caption{Encoding matrices for two-sided SUICP(SNCS) given in Example \ref{vcex4}}
\label{encodingmatrix2}
\end{figure}
\begin{example}
\label{vcex5}
Consider the two-sided SUICP(SNCS) with $K=22,D=7,U=3$. The optimal length VLIC for this SUICP is given in Example \ref{vcex2}. The encoding matrix $\mathbf{L}_{44 \times 9}^{(2s)}$ for this optimal length VLIC by using Lemma \ref{lemma1} is given in Fig \ref{encodingmatrix}.
\begin{figure}
\arraycolsep=1pt
\setlength\extrarowheight{-2.0pt}
{
$$\mathbf{L}_{44 \times 9}^{(2s)}=\left[\begin{array}{*{20}c}
   1 & 0 & 0 & 0 & 0 & 0 & 0 & 0 & 0\\
   0 & 1 & 0 & 0 & 0 & 0 & 0 & 0 & 0\\
   1 & 0 & 0 & 0 & 0 & 0 & 0 & 0 & 0\\
   0 & 1 & 0 & 0 & 0 & 0 & 0 & 0 & 0\\
   0 & 1 & 0 & 0 & 0 & 0 & 0 & 0 & 0\\
   0 & 0 & 1 & 0 & 0 & 0 & 0 & 0 & 0\\
   0 & 1 & 0 & 0 & 0 & 0 & 0 & 0 & 0\\
   0 & 0 & 1 & 0 & 0 & 0 & 0 & 0 & 0\\
   0 & 0 & 1 & 0 & 0 & 0 & 0 & 0 & 0\\
   0 & 0 & 0 & 1 & 0 & 0 & 0 & 0 & 0\\
   0 & 0 & 1 & 0 & 0 & 0 & 0 & 0 & 0\\
   0 & 0 & 0 & 1 & 0 & 0 & 0 & 0 & 0\\
   
   0 & 0 & 0 & 1 & 0 & 0 & 0 & 0 & 0\\
   0 & 0 & 0 & 0 & 1 & 0 & 0 & 0 & 0\\
   0 & 0 & 0 & 1 & 0 & 0 & 0 & 0 & 0\\
   0 & 0 & 0 & 0 & 1 & 0 & 0 & 0 & 0\\
   
   0 & 0 & 0 & 0 & 1 & 0 & 0 & 0 & 0\\
   0 & 0 & 0 & 0 & 0 & 1 & 0 & 0 & 0\\
   0 & 0 & 0 & 0 & 1 & 0 & 0 & 0 & 0\\
   0 & 0 & 0 & 0 & 0 & 1 & 0 & 0 & 0\\
   
   0 & 0 & 0 & 0 & 0 & 1 & 0 & 0 & 0\\
   0 & 0 & 0 & 0 & 0 & 0 & 1 & 0 & 0\\
   0 & 0 & 0 & 0 & 0 & 1 & 0 & 0 & 0\\
   0 & 0 & 0 & 0 & 0 & 0 & 1 & 0 & 0\\
   
   0 & 0 & 0 & 0 & 0 & 0 & 1 & 0 & 0\\
   0 & 0 & 0 & 0 & 0 & 0 & 0 & 1 & 0\\
   0 & 0 & 0 & 0 & 0 & 0 & 1 & 0 & 0\\
   0 & 0 & 0 & 0 & 0 & 0 & 0 & 1 & 0\\
   
   0 & 0 & 0 & 0 & 0 & 0 & 0 & 1 & 0\\
   0 & 0 & 0 & 0 & 0 & 0 & 0 & 0 & 1\\
   0 & 0 & 0 & 0 & 0 & 0 & 0 & 1 & 0\\
   0 & 0 & 0 & 0 & 0 & 0 & 0 & 0 & 1\\
   
   0 & 0 & 0 & 0 & 0 & 0 & 0 & 0 & 1\\
   1 & 0 & 1 & 0 & 1 & 0 & 1 & 0 & 1\\
   0 & 0 & 0 & 0 & 0 & 0 & 0 & 0 & 1\\
   1 & 0 & 1 & 0 & 1 & 0 & 1 & 0 & 1\\
   
   1 & 0 & 1 & 0 & 1 & 0 & 1 & 0 & 1\\
   0 & 1 & 0 & 1 & 0 & 1 & 0 & 1 & 1\\
   1 & 0 & 1 & 0 & 1 & 0 & 1 & 0 & 1\\
   0 & 1 & 0 & 1 & 0 & 1 & 0 & 1 & 1\\
   
   0 & 1 & 0 & 1 & 0 & 1 & 0 & 1 & 1\\
   1 & 0 & 0 & 0 & 0 & 0 & 0 & 0 & 0\\
   0 & 1 & 0 & 1 & 0 & 1 & 0 & 1 & 1\\
   1 & 0 & 0 & 0 & 0 & 0 & 0 & 0 & 0\\
   \end{array}\right]~\mathbf{L}_{24 \times 5}^{(2s)}=\left[\begin{array}{*{20}c}
   1 & 0 & 0 & 0 & 0  \\
   1 & 0 & 0 & 0 & 0  \\
   1 & 0 & 0 & 0 & 0  \\
   0 & 1 & 0 & 0 & 0  \\
   0 & 1 & 0 & 0 & 0  \\
   0 & 1 & 0 & 0 & 0  \\
   0 & 0 & 1 & 0 & 0  \\
   0 & 0 & 1 & 0 & 0  \\
   0 & 0 & 1 & 0 & 0  \\
   0 & 0 & 0 & 1 & 0  \\
   0 & 0 & 0 & 1 & 0  \\
   0 & 0 & 0 & 1 & 0  \\
   0 & 0 & 0 & 0 & 1  \\
   0 & 0 & 0 & 0 & 1  \\
   0 & 0 & 0 & 0 & 1  \\
   1 & 0 & 0 & 1 & 0  \\
   1 & 0 & 0 & 1 & 0  \\
   1 & 0 & 0 & 1 & 0  \\
   0 & 1 & 0 & 0 & 1  \\
   0 & 1 & 0 & 0 & 1  \\
   0 & 1 & 0 & 0 & 1  \\
   0 & 0 & 1 & 1 & 1  \\
   0 & 0 & 1 & 1 & 1  \\
   0 & 0 & 1 & 1 & 1  \\
   \end{array}\right]$$
}
\caption{Encoding matrices for two-sided SUICP(SNCS) given in Example \ref{vcex5} and \ref{vcex6}}
\label{encodingmatrix}
\end{figure}
\end{example}
\begin{example}
\label{vcex6}
Consider the two-sided SUICP(SNCS) with $K=24,D=11,U=2$. The optimal length VLIC for this SUICP is given in Example \ref{vcex3}. The encoding matrix $\mathbf{L}_{24 \times 5}^{(2s)}$ for this optimal length VLIC by using Lemma \ref{lemma1} is given in Fig \ref{encodingmatrix}.
\end{example}
\section{Some special cases of proposed construction}
\label{sec3}
In this section, we give four special cases of the proposed construction. Consider the two-sided SUICP(SNCS) with $K$ messages, $D$ side-information after and $U$ side-information before the desired message as given in \eqref{antidote}. We have $a=\text{gcd}(K,D-U,U+1)$. In Theorem \ref{vcthm1}, we combine $U+1$ message symbols, one from each of the message vectors $x_t$ for every $t \in [as-(U+1)+a:as+a-1]$ as given below.
\begin{align}
\label{thm2eq6}
\nonumber
y_s=\sum_{i=1}^{u_a} \sum_{j=0}^{a-1}& x_{a(s+1-i)+j,i} \\& \text{for}~s \in [0:K_a-1].
\end{align}
\subsection{Optimal scalar linear index codes for two-sided SUICP(SNCS)}
In the construction given in Theorem \ref{vcthm1}, if $U+1$ divides both $K$ and $D-U$, then we have $a=U+1$ and $u_a=1$. In this case, \eqref{thm2eq6} reduces to 
\begin{align}
\label{thm2eq7}
\nonumber
y_s&=\sum_{i=1}^1 \sum_{j=0}^U x_{(U+1)(s+1-i)+j,i}\\&=\sum_{j=0}^U x_{(U+1)s+j} ~~~ \text{for}~s \in [0:K_a].
\end{align}

Hence, in this case, the index code given by \eqref{thm1eq5} is an optimal length scalar linear index code.
\begin{example}
\label{vcex7}
Consider the two-sided SUICP(SNCS) with $K=14,D=3$ and $U=1$. For this SUICP(SNCS), we have $\text{gcd}(K,D-U,U+1)=U+1=2$. The optimal length index code for this two-sided SUICP(SNCS) is a scalar linear index code. For this two-sided SUICP(SNCS), we have $K_a=7,\Delta_a=1$. An optimal length encoding matrix for this one-sided SUICP(SNCS) is a AIR matrix of size $7 \times 6$. AIR matrix of size $7 \times 6$ is given below.
\arraycolsep=1pt
\setlength\extrarowheight{-2.0pt}
{
$$\mathbf{L}_{7 \times 6}=\left[\begin{array}{*{20}c}
   1 & 0 & 0 & 0 & 0 & 0\\
   0 & 1 & 0 & 0 & 0 & 0\\
   0 & 0 & 1 & 0 & 0 & 0\\
   0 & 0 & 0 & 1 & 0 & 0\\
   0 & 0 & 0 & 0 & 1 & 0\\
   0 & 0 & 0 & 0 & 0 & 1\\
   1 & 1 & 1 & 1 & 1 & 1\\
  \end{array}\right]$$
}

The scalar linear index code $\mathfrak{C}$ for this one-sided SUICP(SNCS) is given below.
\begin{align*}
\mathfrak{C}=\{c_0,c_1,c_2,c_3,c_4,c_5\}=\{&y_0+y_6,~~~y_1+y_6,\\&y_2+y_6,~~~y_3+y_6,\\&y_4+y_6,~~~y_5+y_6\}.
\end{align*}
 
The scalar linear index code $\mathfrak{C}^{(2s)}$ for the given two-sided SUICP(SNCS) is obtained by replacing $y_s$ in $\mathfrak{C}$ with $x_{2s}+x_{2s+1}$ for $s \in [0:6]$. The scalar linear index code $\mathfrak{C}^{(2s)}$ is given below.
\begin{align*}
\mathfrak{C}^{(2s)}=\{&x_0+x_1+x_{12}+x_{13},~~~x_2+x_3+x_{12}+x_{13},\\&x_4+x_5+x_{12}+x_{13},~~~~x_6+x_7+x_{12}+x_{13},\\&x_8+x_9+x_{12}+x_{13},~~~x_{10}+x_{11}+x_{12}+x_{13}\}.
\end{align*}
\end{example}
\subsection{Optimal $(U+1)$ dimensional VLICs for two-sided SUICP(SNCS)}
In the construction given in Theorem \ref{vcthm1}, if $\text{gcd}(K,D-U,U+1)=1$, then we have $a=1$ and $u_a=U+1$. In this case, \eqref{thm2eq6} reduces to 
\begin{align}
\label{thm2eq8}
\nonumber
y_s&=\sum_{i=1}^{U+1} \sum_{j=0}^0 x_{(s+1-i)+j,i}\\&=\sum_{i=1}^{U+1} x_{s+1-i,i} ~~\text{for}~s \in [0:K-1].
\end{align}

Hence, in this case, the index code given by \eqref{thm1eq5} is an optimal length $(U+1)$ dimensional VLIC.

\begin{example}
\label{vcex8}
Consider the two-sided SUICP(SNCS) with $K=11, D=5, U=2$. For this SUICP(SNCS), $\text{gcd}(K,D-U,U+1)=a=1$. The optimal length index code is a three dimensional VLIC. For this SUICP(SNCS), we have $K_a=K=11$  and $\Delta_a=3$. The AIR matrix of size $\mathbf{L}_{11 \times 8}$ is given below.

\begin{small}
\arraycolsep=1pt
\setlength\extrarowheight{-2.0pt}
{
$$\mathbf{L}_{11 \times 8}=\left[\begin{array}{*{20}c}
   1 & 0 & 0 & 0 & 0 & 0 & 0 & 0\\
   0 & 1 & 0 & 0 & 0 & 0 & 0 & 0\\
   0 & 0 & 1 & 0 & 0 & 0 & 0 & 0\\
   0 & 0 & 0 & 1 & 0 & 0 & 0 & 0\\
   0 & 0 & 0 & 0 & 1 & 0 & 0 & 0\\
   0 & 0 & 0 & 0 & 0 & 1 & 0 & 0\\
   0 & 0 & 0 & 0 & 0 & 0 & 1 & 0\\
   0 & 0 & 0 & 0 & 0 & 0 & 0 & 1\\
   1 & 0 & 0 & 1 & 0 & 0 & 1 & 0\\
   0 & 1 & 0 & 0 & 1 & 0 & 0 & 1\\
   0 & 0 & 1 & 0 & 0 & 1 & 1 & 1\\
   \end{array}\right]$$
}
\end{small}

The scalar linear index code $\mathfrak{C}$ for the one-sided SUICP(SNCS) with $K_a=11$ and $\Delta_a=3$ is given below. 
\begin{align*}
\mathfrak{C}=\{&y_0+y_8, ~~y_1+y_9,~~y_2+y_{10},\\&y_3+y_8, ~~y_4+y_9, ~~y_5+y_{10},\\& y_6+y_8+y_{10}, ~y_7+y_9+y_{10}\}. 
\end{align*}

The VLIC $\mathfrak{C}^{(2s)}$ for the given two-sided SUICP(SNCS) is obtained by replacing $y_s$ in $\mathfrak{C}$ with $x_{s,1}+x_{s-1,2}+x_{s-2,3}$ for $s \in [0:10]$. The VLIC $\mathfrak{C}^{(2s)}$ for the given two-sided SUICP(SNCS) is given below.
\begin{align*}
\mathfrak{C}^{(2s)}=\{&\underbrace{x_{0,1}+x_{10,2}+x_{9,3}}_{y_0}+\underbrace{x_{8,1}+x_{7,2}+x_{6,3}}_{y_8},\\& \underbrace{x_{1,1}+x_{0,2}+x_{10,3}}_{y_1}+\underbrace{x_{9,1}+x_{8,2}+x_{7,3}}_{y_9}, \\& \underbrace{x_{2,1}+x_{1,2}+x_{0,3}}_{y_2}+\underbrace{x_{10,1}+x_{9,2}+x_{8,3}}_{y_{10}},\\& \underbrace{x_{3,1}+x_{2,2}+x_{1,3}}_{y_3}+\underbrace{x_{8,1}+x_{7,2}+x_{6,3}}_{y_8}, \\& \underbrace{x_{4,1}+x_{3,2}+x_{2,3}}_{y_4}+\underbrace{x_{9,1}+x_{8,2}+x_{7,3}}_{y_9},\\& \underbrace{x_{5,1}+x_{4,2}+x_{3,3}}_{y_5}+\underbrace{x_{10,1}+x_{9,2}+x_{8,3}}_{y_{10}}, \\& \underbrace{x_{6,1}+x_{5,2}+x_{4,3}}_{y_6}+\underbrace{x_{8,1}+x_{7,2}+x_{6,3}}_{y_8}\\&+\underbrace{x_{10,1}+x_{9,2}+
x_{8,3}}_{y_{10}},\\& \underbrace{x_{7,1}+x_{6,2}+x_{5,3}}_{y_7}+\underbrace{x_{9,1}+x_{8,2}+x_{7,3}}_{y_9}\\& +\underbrace{x_{10,1}+x_{9,2}+
x_{9,3}}_{y_{10}}\}.
\end{align*}

\end{example}
\subsection{Optimal scalar linear index codes for one-sided SUICP(SNCS)}
In the construction given in Theorem \ref{vcthm1}, if $U=0$, then we have, $a=\text{gcd}(K,D-U,U+1)=1$ and $u_a=U+1=1$. In this case, \eqref{thm2eq6} reduces to 
%\begin{align}
$y_s=x_s ~~\text{for}~s \in [0:K-1].$
%\end{align}

Hence, in this case, the index code given by \eqref{thm1eq5} is an optimal length scalar linear index code.
\begin{example}
\label{vcex9}
Consider the two-sided SUICP(SNCS) with $K=7,D=2$ and $U=0$. For this SUICP(SNCS), we have $\text{gcd}(K,D-U,U+1)=1$. The optimal length index code for this SUICP(SNCS) is scalar linear index code. For this SUICP(SNCS), we have $K=K_a=7,D=\Delta_a=2$. An optimal length encoding matrix for this SUICP(SNCS) is a AIR matrix of size $7 \times 5$. AIR matrix of size $7 \times 5$ is given below.
\arraycolsep=1pt
\setlength\extrarowheight{-2.0pt}
{
$$\mathbf{L}_{7 \times 5}=\left[\begin{array}{*{20}c}
   1 & 0 & 0 & 0 & 0 \\
   0 & 1 & 0 & 0 & 0 \\
   0 & 0 & 1 & 0 & 0 \\
   0 & 0 & 0 & 1 & 0 \\
   0 & 0 & 0 & 0 & 1 \\
   1 & 0 & 1 & 0 & 1 \\
   0 & 1 & 0 & 1 & 1 \\
  \end{array}\right]$$
}

The scalar linear index code $\mathfrak{C}^{(2s)}$ for this SUICP(SNCS) is given below.
\begin{align*}
\mathfrak{C}^{(2s)}=\mathfrak{C}=\{&x_0+x_5,~~~x_1+x_6,~~~x_2+x_5,\\&x_3+x_6,~~~x_4+x_5+x_6\}.
\end{align*}
\end{example}
\subsection{Optimal $\frac{U+1}{D-U}$ dimensional VLICs for two-sided SUICP(SNCS)}
In the construction given in Theorem \ref{vcthm1}, if $\text{gcd}(K,D-U,U+1)=D-U,$ then we have $u_a=\frac{U+1}{D-U}$, $K_a=\frac{K}{D-U}$ and $\Delta_a=1$. Hence, in this case, the AIR matrix used is always of dimension $K_a \times (K_a-1)$ and the index code given by \eqref{thm1eq5} is an optimal length $\frac{U+1}{D-U}$ dimensional VLIC. In \cite{TRCR}, it is shown that the message probability of error in decoding a message at a particular receiver decreases with a decrease in the number of transmissions used to decode the message among the total of broadcast transmissions. In this case, as the AIR matrix size is always of the form $K_a \times (K_a-1)$, the low complexity decoding given in \cite{VaR3} enables each receiver to decode its wanted message symbol by using atmost two broadcast symbols.

\begin{example}
\label{vcex10}
Consider the two-sided SUICP(SNCS) with $K=12,D=5$ and $U=3$. For this SUICP(SNCS), we have $\text{gcd}(K,D-U,U+1)=D-U=2$. The optimal length index code for this two-sided SUICP(SNCS) is a two dimensional VLIC. For this two-sided  SUICP(SNCS), we have $K_a=6,\Delta_a=1$. An optimal length encoding matrix for this one-sided SUICP(SNCS) is a AIR matrix of size $6 \times 5$. AIR matrix of size $6 \times 5$ is given below.
\arraycolsep=1pt
\setlength\extrarowheight{-2.0pt}
{
$$\mathbf{L}_{6 \times 5}=\left[\begin{array}{*{20}c}
   1 & 0 & 0 & 0 & 0 \\
   0 & 1 & 0 & 0 & 0 \\
   0 & 0 & 1 & 0 & 0 \\
   0 & 0 & 0 & 1 & 0 \\
   0 & 0 & 0 & 0 & 1 \\
   0 & 0 & 0 & 0 & 0 \\
   1 & 1 & 1 & 1 & 1 \\
  \end{array}\right]$$
}

The scalar linear index code $\mathfrak{C}$ for this one-sided SUICP(SNCS) is given below.
\begin{align*}
\mathfrak{C}=\{c_0,c_1.c_2,c_3,c_4\}=\{&y_0+y_5,~~~y_1+y_5,~~~y_2+y_5,\\&y_3+y_5,~~~y_4+y_5\}.
\end{align*}
 
The scalar linear index code $\mathfrak{C}^{(2s)}$ for the given two-sided SUICP(SNCS) is obtained by replacing $y_s$ in $\mathfrak{C}$ with $x_{2s,1}+x_{2s-2,2}+x_{2s+1,1}+x_{2s-1,2}$ for $s \in [0:5]$. 

%%The two dimensional VLIC $\mathfrak{C}^{(2s)}$ is given by
%%\begin{align*}
%%\{&x_{0,1}+x_{11,2}+x_{1,1}+x_{0,2}+x_{10,1}+x_{9,2}+x_{11,1}+x_{10,2},\\&x_{2,1}+x_{1,2}+x_{3,1}+x_{2,2}+x_{10,1}+x_{9,2}+x_{11,1}+x_{10,2},\\&x_{4,1}+x_{3,2}+x_{5,1}+x_{4,2}+x_{10,1}+x_{9,2}+x_{11,1}+x_{10,2},\\&x_{6,1}+x_{5,2}+x_{7,1}+x_{6,2}+x_{10,1}+x_{9,2}+x_{11,1}+x_{10,2},\\&x_{8,1}+x_{7,2}+x_{9,1}+x_{8,2}+x_{10,1}+x_{9,2}+x_{11,1}+x_{10,2},\}.
%%\end{align*}
\end{example}

\begin{example}
\label{vcex11}
Consider the two-sided SUICP(SNCS) with $K=27,D=8$ and $U=5$. For this SUICP(SNCS), we have $\text{gcd}(K,D-U,U+1)=D-U=3$. The optimal length index code for this two-sided SUICP(SNCS) is a two dimensional VLIC. For this two-sided SUICP(SNCS), we have $K_a=9,\Delta_a=1$. An optimal length encoding matrix for this one-sided SUICP(SNCS) is a AIR matrix of size $9 \times 8$. AIR matrix of size $9 \times 8$ is given below.
\arraycolsep=1pt
\setlength\extrarowheight{-2.0pt}
{
$$\mathbf{L}_{9 \times 8}=\left[\begin{array}{*{20}c}
   1 & 0 & 0 & 0 & 0 & 0 & 0 & 0\\
   0 & 1 & 0 & 0 & 0 & 0 & 0 & 0\\
   0 & 0 & 1 & 0 & 0 & 0 & 0 & 0\\
   0 & 0 & 0 & 1 & 0 & 0 & 0 & 0\\
   0 & 0 & 0 & 0 & 1 & 0 & 0 & 0\\
   0 & 0 & 0 & 0 & 0 & 1 & 0 & 0\\
   0 & 0 & 0 & 0 & 0 & 0 & 1 & 0\\
   0 & 0 & 0 & 0 & 0 & 0 & 0 & 1\\
   1 & 1 & 1 & 1 & 1 & 1 & 1 & 1\\
  \end{array}\right]$$
}

The scalar linear index code $\mathfrak{C}$ for this one-sided SUICP(SNCS) is given below.
\begin{align*}
\mathfrak{C}=\{&y_0+y_8,~~~y_1+y_8,~~~y_2+y_8,~~~y_3+y_8,\\&y_4+y_8,~~~y_5+y_8,~~~y_6+y_8,~~~y_7+y_8\}.
\end{align*}
 
The two dimensional VLIC $\mathfrak{C}^{(2s)}$ for the given two-sided SUICP(SNCS) is obtained by replacing $y_s$ in $\mathfrak{C}$ with $x_{3s,1}+x_{3s-3,2}+x_{3s+1,1}+x_{3s-2,2}+x_{3s+2,1}+x_{3s-1,2}$ for $s \in [0:8]$. 

%%The two dimensional VLIC $\mathfrak{C}^{(2s)}$ is given by
%%\begin{align*}
%%\{&x_{0,1}+x_{11,2}+x_{1,1}+x_{0,2}+x_{10,1}+x_{9,2}+x_{11,1}+x_{10,2},\\&x_{2,1}+x_{1,2}+x_{3,1}+x_{2,2}+x_{10,1}+x_{9,2}+x_{11,1}+x_{10,2},\\&x_{4,1}+x_{3,2}+x_{5,1}+x_{4,2}+x_{10,1}+x_{9,2}+x_{11,1}+x_{10,2},\\&x_{6,1}+x_{5,2}+x_{7,1}+x_{6,2}+x_{10,1}+x_{9,2}+x_{11,1}+x_{10,2},\\&x_{8,1}+x_{7,2}+x_{9,1}+x_{8,2}+x_{10,1}+x_{9,2}+x_{11,1}+x_{10,2},\}.
%%\end{align*}
\end{example}
%%%%%%%%%%%%%%%%%%%%%%%%%%%%%%%%%%%%%%%%%%%%%%%%%%%%%%%%%%%%%%%%%%%%%%%%%%%%%%%%%%%%%%%%%%%%%%%%%%%%%%%%%%%%%%%%%
%\section{Discussion}
%\label{sec4}
%In this paper, a method to obtain reduced dimension and independent field size VLICs for two-sided SUICP(SNCS) for arbitrary $K,D$ and $U$ was given.
%%%%%%%%%%%%%%%%%%%%
\section*{Acknowledgment}
This work was supported partly by the Science and Engineering Research Board (SERB) of Department of Science and Technology (DST), Government of India, through J.C. Bose National Fellowship to B. Sundar Rajan.

%%%%%%%%%%%%%%%%%%%%%%%%%%%%%%%%%%%%%%%%%%%%%%%%%%%%%%%
%%%%%%%%%%%%%%%%%%%%%%%%%%%%%%%%%%%%%%%%%%%%%%%%%%%%%%%%%%%%%%%%
\end{document}